\documentclass[aps,pra,twocolumn,superscriptaddress,nofootinbib,floatfix,citeautoscript,10pt]{revtex4-2}

\usepackage{silence}
\usepackage[linesnumbered,ruled]{algorithm2e}
\SetArgSty{textnormal}

\usepackage{amsmath}
\usepackage{amssymb}
\usepackage{amsthm}
\usepackage{bm} 
\usepackage{bbm} 
\usepackage{braket}
\usepackage{dsfont}
\usepackage[colorlinks]{hyperref}
\hypersetup{
    colorlinks  = true,
    citecolor   = PineGreen,
    linkcolor   = MidnightBlue,
    urlcolor    = PineGreen
}
\usepackage{xurl}
\hypersetup{breaklinks=true}
\usepackage[bb=stix]{mathalpha} 
\usepackage{mathtools}
\usepackage{thmtools}
\usepackage[dvipsnames]{xcolor}
\usepackage{tikz}
\usepackage{etoolbox}
\apptocmd{\sloppy}{\hbadness 10000\relax}{}{}

\DeclarePairedDelimiter{\abs}{\lvert}{\rvert} 

\DeclareMathOperator{\Sym}{Sym} 
\DeclareMathOperator{\round}{round} 
\DeclareMathOperator{\dist}{dist} 

\newcommand*{\N}{\mathbb{N}}
\newcommand*{\Z}{\mathbb{Z}}
\newcommand*{\R}{\mathbb{R}}
\newcommand*{\C}{\mathbb{C}}
\newcommand*{\one}{\mathds{1}} 
\newcommand*{\bit}{\mathbb{b}} 
\newcommand*{\qubit}{\mathbb{q}} 
\newcommand*{\defcolon}{\,:\,} 
\newcommand*{\scriptin}{\raisebox{0.15ex}{$\scriptscriptstyle\in$}} 
\newcommand*{\expval}[3]{\bra{#1}\negthickspace#2\negthickspace\ket{#3}} 

\declaretheorem[style=plain]{theorem}

\declaretheorem[style=plain,sibling=theorem]{corollary}
\declaretheorem[style=plain,sibling=theorem]{lemma}
\declaretheorem[style=plain,sibling=theorem]{proposition}
\declaretheorem[style=definition,sibling=theorem]{definition}
\declaretheorem[style=plain]{conjecture}

\begin{document}

\title{Refuting the QAOA fixed-angle conjecture}

\author{Lennart \surname{Binkowski}}
\email{lennart.binkowski@itp.uni-hannover.de}
\affiliation{Institut für Theoretische Physik, Leibniz Universität Hannover}
\affiliation{Strangeworks, Austin, TX 78746, United States}

\begin{abstract}
    The fixed-angle conjecture for the quantum approximate optimisation algorithm~(QAOA) is shown to be false for depth-2 QAOA on 9-regular graphs, and proven to be correct for depth-1 on any regular graph, as well as for any depth on all 2-regular graphs.
\end{abstract}

\maketitle

\section{\label{section:Introduction}Introduction}

The quantum approximate optimisation algorithm~(QAOA)~\cite{Farhi2014AQuantumApproximateOptimizationAlgorithm} is arguably the best-studied (near-term) quantum algorithm for combinatorial optimisation.
In essence, the QAOA employs parameterised quantum circuits for constructing an ansatz class to optimise over.
This translates the task of finding a good quantum state to determining (a few) promising parameter values.
However, finding (globally) optimal parameter values typically amounts to higher-dimensional, non-convex optimisation, which is often an even more challenging task than solving the original problem~\cite{Bittel2021TrainingVariationalQuantumAlgorithmsIsNpHard}.
Global parameter optimisation is tractable only for small problem instances and few parameters.
Initiated by Zhou~\textit{et~al.}~\cite{Zhou2020QuantumApproximateOptimizationAlgorithmPerformanceMechanismAndImplementationOnNearTermDevices} and investigated further by Brand\~{a}o~\textit{et~al.}~\cite{Brandao2018ForFixedControlParametersTheQuantumApproximateOptimizationAlgorithmSObjectiveFunctionValueConcentratesForTypicalInstances} and Galda~\textit{et~al.}~\cite{Galda2021TransferabilityOfOptimalQaoaParametersBetweenRandomGraphs}, transferring optimal parameters from small to larger instances has become a computationally cheap, yet practically well-performing alternative.
QAOA parameter transfer has historically been studied on the MaxCut problem and especially on $3$-regular graphs, but more recent studies also investigate the same principle for the weighted MaxCut problem~\cite{Shaydulin2023ParameterTransferForQuantumApproximateOptimizationOfWeightedMaxcut}, the Maximum Independent Set problem~\cite{Xu2025QaoaParameterTransferabilityForMaximumIndependentSetUsingGraphAttentionNetworks}, and even between different problem classes~\cite{Nguyen2025CrossProblemParameterTransferInQuantumApproximateOptimizationAlgorithmAMachineLearningApproach}.

Most of these similarity results for QAOA parameter transfer are directionless: given a class of problem instances, it is not clear on which instance the fixed-parameter QAOA performs worst or best.
This question was first addressed by Wurtz and Love~\cite{Wurtz2021MaxcutQuantumApproximateOptimizationAlgorithmPerformanceGuaranteesFor}, again for the MaxCut problem on $3$-regular graphs.
Using the QAOA's locality and exhaustive subgraph analysis---as already proposed by Farhi~\textit{et~al.}~\cite{Farhi2014AQuantumApproximateOptimizationAlgorithm}---they found that, for depth-$2$, the QAOA performs worst on $3$-regular graphs without cycles shorter than $6$.
On all these graphs, the relevant subgraphs are pairs of perfect binary trees, fused together at their root vertices.
Their results established that on every $3$-regular graph, irrespective of its actual size, the QAOA with tree-optimal parameter values performs at least as well as on any graph with no small cycles.
For larger depths, the computation-heavy proofs quickly become infeasible to carry out.
Nevertheless, they conjectured that the observed ``performance guarantee'' of the tree subgraphs extends to larger depths and to all regular graphs.
Shortly after, Wurtz and Lykov~\cite{Wurtz2021FixedAngleConjecturesForTheQuantumApproximateOptimizationAlgorithmOnRegularMaxcutGraphs} provided additional numerical evidence for the ``fixed-angle'' conjecture for depths up to $p = 11$, noting that, if true, the QAOA for $p \geq 11$ would have a strictly better performance guarantee on $3$-regular graphs than the famous Goemans--Williamson algorithm~\cite{Goemans1995ImprovedApproximationAlgorithmsForMaximumCutAndSatisfiabilityProblemsUsingSemidefiniteProgramming}.

The fixed-angle conjecture is widely believed to be true on $3$-regular graphs, constituting a promising, parameter optimisation-free approach to large instance sizes~\cite{Augustino2024StrategiesForRunningTheQaoaAtHundredsOfQubits}.
However, as mentioned before, the computational requirements for Wurtz and Love's proof strategy quickly blow up with the QAOA depth $p$, simply because the larger $p$, the more subgraphs need to be considered.
Without some clever tricks or a fundamentally different proof strategy, the interesting regime of $p \geq 11$ will probably never be verifiable.
In contrast, even though numerically (or even analytically) tractable, the low-depth, but large-degree regime is quite unexplored.\footnote{
    This is arguably the case, because the $d=3$ case is the most interesting one, since the QAOA's performance is known to degrade with increasing $d$, so that any performance guarantees would only match the Goemans--Williamson bound at larger depths.
}
While Wurtz--Love and Wurtz--Lykov formulate the fixed-angle conjecture for all $d$-regular graphs, their analyses mostly restrict to the $3$-regular case.
In this work, I am exploring the validity of the fixed-angle conjecture on general $d$-regular graphs.
The three findings are that the fixed-angle conjecture is i) true for all $d$ as long as $p = 1$, ii) true for $d = 1, 2$ and all $p$, iii) false in general as there exists a counterexample at $d = 9$ with $p = 2$.
The positive results combine closed-form expressions for the parameterised QAOA performance at $p = 1$, as established by Wang~\textit{et~al.}~\cite{Wang2018QuantumApproximateOptimizationAlgorithmForMaxcutAFermionicView}, with elementary binomial and trigonometric inequalities, and, for the $d = 2$ case, with Marwaha's~\cite{Marwaha2026TheQaoaOnTheRingOfDisagrees} recently established connection between the QAOA and quantum signal processing~\cite{Low2017OptimalHamiltonianSimulationByQuantumSignalProcessing}.
The counterexample to the fixed-angle conjecture is the graph $K_{9, 9}$, the complete bipartite graph of degree $d = 9$.
The depth-$1$ QAOA's performance on $K_{n, m}$ has been previously upper-bounded by Bae~\textit{et~al.}~\cite{Bae2025ModifiedRecursiveQaoaForExactMaxCutSolutionsOnBipartiteGraphsClosingTheGapBeyondQaoaLimit}.
However, as the counterexample needs $p = 2$, I am instead using an exact Dicke state simulator for inferring the QAOA's performance for given parameter values.
In addition, Marwaha's~\cite{Marwaha2021LocalClassicalMaxCutAlgorithmOutperformsQaoaOnHighGirthRegularGraphs} closed-form expression for the depth-$2$ QAOA's performance on any $d$-regular graph without short cycles is used for studying tree-optimal parameters.

\section{\label{section:Preliminaries}Preliminaries}

In this section, I introduce all the notation and vocabulary necessary for rigorously stating and discussing the conjecture.
To this end, I begin with the formulation of the MaxCut problem.
An instance of the MaxCut problem is fully described by an undirected graph $G = (V, E)$, i.e., by a set $V$ of vertices and a set $E$ of undirected edges between vertices in $V$.
Undirected means that an edge in $E$ is described by a two-element subset $i j \coloneqq \{i, j\}$ with vertices $i, j \in V$, $i \neq j$.
The objective of MaxCut is to cut as many edges as possible by partitioning $V$ into two vertex subsets; an edge is considered cut if its endpoints lie in different subsets.
The usual binary encoding of this task uses $n \coloneqq \abs{V} \in \N \coloneqq \{1, 2, \ldots\}$ bits and associates each partition $(V_{1}, V_{2})$ of $V \cong [n] \coloneqq \{1, \ldots, n\}$ with a bitstring $\bm{x} \in \bit^{n}$ via
\begin{align*}
    x_{i} = \begin{cases}
        0, & \text{if } i \in V_{1} \\
        1, & \text{if } i \in V_{2}.
    \end{cases}
\end{align*}
That is, every bitstring is considered a feasible assignment; MaxCut is an unconstrained optimisation problem.
Under this encoding, the objective function $f : \bit^{n} \rightarrow \N_{0}$, measuring the number of cut edges, reads
\begin{align*}
    f(\bm{x}) \coloneqq \sum_{i j \scriptin E} x_{i} (1 - x_{j}) + (1 - x_{i}) x_{j}.
\end{align*}

With the standard Ising encoding, assigning one qubit per bit and replacing $2 x_{i}$ by $1 - \texttt{Z}_{i}$, MaxCut's objective function is represented by the diagonal $n$-qubit objective Hamiltonian
\begin{align*}
    H_{f} = \sum_{i j \scriptin E} \tfrac{1}{2} (\one - \texttt{Z}_{i} \texttt{Z}_{j}).
\end{align*}

Given a depth $p \in \N$, the QAOA prescribes a class of ansatz states $\ket{\bm{\beta}, \bm{\gamma}}$, parameterised by $2 p$ real parameters $\bm{\beta}, \bm{\gamma}$ and built from the $p$-fold alternation of ``phase separator'' $\exp(-i \gamma H_{f})$ and ``mixer'' $\exp(-i \beta B)$ applied to the uniform superposition $\ket{\bm{+}}$ of all computational basis states, where $B \coloneqq \sum_{i = 1}^{n} \texttt{X}_{i}$.
The optimisation now runs over the $2 p$ real parameters $\bm{\beta}, \bm{\gamma}$ instead of $2^{n}$ bitstrings, and is guided by the parameterised expectation values
\begin{align*}
    F(\bm{\beta}, \bm{\gamma}) &\coloneqq \expval{\bm{\beta}, \bm{\gamma}}{H_{f}}{\bm{\beta}, \bm{\gamma}} \\
    &= \sum_{i j \scriptin E} \tfrac{1}{2} \expval{\bm{\beta}, \bm{\gamma}}{\one - \texttt{Z}_{i} \texttt{Z}_{j}}{\bm{\beta}, \bm{\gamma}} \\
    &\eqqcolon \sum_{i j \scriptin E} f_{i j}(\bm{\beta}, \bm{\gamma}).
\end{align*}

Most importantly for the subsequent analysis, the QAOA is local.
Denote the graph distance of a vertex $v$ to an edge $i j$ by
\begin{align*}
    \dist(v, i j) \coloneqq \min\{\dist(v, i), \dist(v, j)\}.
\end{align*}
In the Heisenberg picture, $f_{i j}$ is the expectation value in $\ket{\bm{+}}$ of $\tfrac{1}{2} (\one - \texttt{Z}_{i} \texttt{Z}_{j})$, conjugated by all $p$ QAOA layers, starting with the last one.
Conjugation by a mixer layer leaves the support of an operator unchanged, whereas in a phase-separator layer all gates on edges without an endpoint in the current support commute with the operator and cancel; the remaining gates enlarge the support by at most one unit of distance.
Hence, counting backwards from $k = 1$, phase-separator layer $k$ acts on an operator supported on $\{v \defcolon \dist(v, i j) \leq k - 1\}$, and only the following edges contribute to $f_{i j}$.

\begin{definition}\label{definition:CausalCone}
    For $p \in \N$ and $i j \in E$, the \emph{causal cone} $G_{i j}^{p} = (V_{i j}^{p}, E_{i j}^{p})$ is the subgraph of $G$ with
    \begin{align*}
        V_{i j}^{p} &\coloneqq \{v \in V \defcolon \dist(v, i j) \leq p\}, \\
        E_{i j}^{p} &\coloneqq \{u v \in E \defcolon \min\{\dist(u, i j), \dist(v, i j)\} \leq p - 1\}.
    \end{align*}
\end{definition}

That is, $G_{i j}^{p}$ is the subgraph induced by all vertices within distance $p$ of $i$ or $j$, minus all edges between two vertices at distance exactly $p$.
Therefore, $f_{i j}$ can be evaluated on the (potentially) much smaller qubit system of $G_{i j}^{p}$, reducing the evaluation of $F$ to $\abs{E}$ individual evaluations.
Furthermore, if $G_{i j}^{p}$ and $G_{i' j'}^{p}$ are isomorphic via a map sending $i j$ to $i' j'$, then $f_{i j}(\bm{\beta}, \bm{\gamma}) = f_{i' j'}(\bm{\beta}, \bm{\gamma})$ for every $(\bm{\beta}, \bm{\gamma}) \in \R^{2 p}$.
Hence it even suffices to aggregate all $G_{i j}^{p}$ into isomorphism classes, evaluate $f_{i j}$ for one representative of each class, and multiply it by the size of the class.

An important property of $G$ which has an interesting interplay with the QAOA's locality is its \emph{girth}, defined to be the length of a shortest cycle in $G$.
If $G$ is acyclic, i.e., a forest, its girth is set to $\infty$.
If $G$ has girth $\geq 2 (p + 1)$, every $G_{i j}^{p}$ is a tree:
Shortest paths to $\{i, j\}$, together with $i j$, form a spanning tree of $G_{i j}^{p}$, and any further edge $x y$ of $G_{i j}^{p}$ would close a cycle in $G$ of length at most $\dist(x, i j) + \dist(y, i j) + 2 \leq (p - 1) + p + 2 = 2 p + 1$.
Note that the removed boundary edges are essential here, since an edge between two vertices at distance $p$ can close a cycle of length $2 (p + 1)$.
For the special case of $G$ being a $d$-regular graph with girth $\geq 2 (p + 1)$, this renders all $G_{i j}^{p}$ isomorphic to the same tree $T_{d}^{p}$, defined below and illustrated in \autoref{figure:TreeExamples}.

\begin{definition}\label{definition:TreeGraph}
    For $p, d \in \N$, let $T_{d}^{p}$ be the tree obtained by joining the roots of two perfect $(d - 1)$-ary trees of depth $p$ by an edge.
    Equivalently, $T_{d}^{p} \cong G_{i j}^{p}$ for any edge $i j$ of any $d$-regular graph $G$ of girth $\geq 2 (p + 1)$.
\end{definition}

\begin{figure*}[t]
    \centering
    \begin{minipage}[t]{0.48\linewidth}
        \centering
        \resizebox{\linewidth}{!}{%
            \begin{tikzpicture}
                \coordinate (i) at (-0.300,0.000);
                \coordinate (j) at (0.300,0.000);
                \coordinate (i0) at (-1.100,1.000);
                \coordinate (i00) at (-1.200,2.000);
                \coordinate (i000) at (-1.400,2.600);
                \coordinate (i001) at (-1.800,2.200);
                \coordinate (i01) at (-2.100,1.100);
                \coordinate (i010) at (-2.600,1.400);
                \coordinate (i011) at (-2.600,0.800);
                \coordinate (i1) at (-1.100,-1.000);
                \coordinate (i10) at (-2.100,-1.100);
                \coordinate (i100) at (-2.600,-0.800);
                \coordinate (i101) at (-2.600,-1.400);
                \coordinate (i11) at (-1.200,-2.000);
                \coordinate (i110) at (-1.800,-2.200);
                \coordinate (i111) at (-1.400,-2.600);
                \coordinate (j0) at (1.100,1.000);
                \coordinate (j00) at (1.200,2.000);
                \coordinate (j000) at (1.400,2.600);
                \coordinate (j001) at (1.800,2.200);
                \coordinate (j01) at (2.100,1.100);
                \coordinate (j010) at (2.600,1.400);
                \coordinate (j011) at (2.600,0.800);
                \coordinate (j1) at (1.100,-1.000);
                \coordinate (j10) at (2.100,-1.100);
                \coordinate (j100) at (2.600,-0.800);
                \coordinate (j101) at (2.600,-1.400);
                \coordinate (j11) at (1.200,-2.000);
                \coordinate (j110) at (1.800,-2.200);
                \coordinate (j111) at (1.400,-2.600);

                \draw[black, thick] (i) -- (i0);
                \draw[black!55, thick] (i0) -- (i00);
                \draw[black!35, thin] (i00) -- (i000);
                \draw[black!35, thin] (i00) -- (i001);
                \draw[black!55, thick] (i0) -- (i01);
                \draw[black!35, thin] (i01) -- (i010);
                \draw[black!35, thin] (i01) -- (i011);
                \draw[black, thick] (i) -- (i1);
                \draw[black!55, thick] (i1) -- (i10);
                \draw[black!35, thin] (i10) -- (i100);
                \draw[black!35, thin] (i10) -- (i101);
                \draw[black!55, thick] (i1) -- (i11);
                \draw[black!35, thin] (i11) -- (i110);
                \draw[black!35, thin] (i11) -- (i111);
                \draw[black, thick] (j) -- (j0);
                \draw[black!55, thick] (j0) -- (j00);
                \draw[black!35, thin] (j00) -- (j000);
                \draw[black!35, thin] (j00) -- (j001);
                \draw[black!55, thick] (j0) -- (j01);
                \draw[black!35, thin] (j01) -- (j010);
                \draw[black!35, thin] (j01) -- (j011);
                \draw[black, thick] (j) -- (j1);
                \draw[black!55, thick] (j1) -- (j10);
                \draw[black!35, thin] (j10) -- (j100);
                \draw[black!35, thin] (j10) -- (j101);
                \draw[black!55, thick] (j1) -- (j11);
                \draw[black!35, thin] (j11) -- (j110);
                \draw[black!35, thin] (j11) -- (j111);
                \draw[red!70!black, very thick] (i) -- (j);

                \draw[black!45, thin, dashed] (i000) -- (-1.600,3.000);
                \draw[black!45, thin, dashed] (i000) -- (-1.800,2.800);
                \draw[black!45, thin, dashed] (i001) -- (-2.000,2.600);
                \draw[black!45, thin, dashed] (i001) -- (-2.200,2.400);
                \draw[black!45, thin, dashed] (i010) -- (-3.050,1.550);
                \draw[black!45, thin, dashed] (i010) -- (-3.050,1.250);
                \draw[black!45, thin, dashed] (i011) -- (-3.050,0.950);
                \draw[black!45, thin, dashed] (i011) -- (-3.050,0.650);
                \draw[black!45, thin, dashed] (i100) -- (-3.050,-0.650);
                \draw[black!45, thin, dashed] (i100) -- (-3.050,-0.950);
                \draw[black!45, thin, dashed] (i101) -- (-3.050,-1.250);
                \draw[black!45, thin, dashed] (i101) -- (-3.050,-1.550);
                \draw[black!45, thin, dashed] (i110) -- (-2.200,-2.350);
                \draw[black!45, thin, dashed] (i110) -- (-2.000,-2.600);
                \draw[black!45, thin, dashed] (i111) -- (-1.800,-2.800);
                \draw[black!45, thin, dashed] (i111) -- (-1.600,-3.000);
                \draw[black!45, thin, dashed] (j000) -- (1.600,3.000);
                \draw[black!45, thin, dashed] (j000) -- (1.800,2.800);
                \draw[black!45, thin, dashed] (j001) -- (2.000,2.600);
                \draw[black!45, thin, dashed] (j001) -- (2.200,2.400);
                \draw[black!45, thin, dashed] (j010) -- (3.050,1.550);
                \draw[black!45, thin, dashed] (j010) -- (3.050,1.250);
                \draw[black!45, thin, dashed] (j011) -- (3.050,0.950);
                \draw[black!45, thin, dashed] (j011) -- (3.050,0.650);
                \draw[black!45, thin, dashed] (j100) -- (3.050,-0.650);
                \draw[black!45, thin, dashed] (j100) -- (3.050,-0.950);
                \draw[black!45, thin, dashed] (j101) -- (3.050,-1.250);
                \draw[black!45, thin, dashed] (j101) -- (3.050,-1.550);
                \draw[black!45, thin, dashed] (j110) -- (2.200,-2.350);
                \draw[black!45, thin, dashed] (j110) -- (2.000,-2.600);
                \draw[black!45, thin, dashed] (j111) -- (1.800,-2.800);
                \draw[black!45, thin, dashed] (j111) -- (1.600,-3.000);

                \fill[fill=blue!70!black] (i0) circle (2.2pt);
                \fill[fill=blue!70!black] (i1) circle (2.2pt);
                \fill[fill=blue!70!black] (j0) circle (2.2pt);
                \fill[fill=blue!70!black] (j1) circle (2.2pt);
                \fill[fill=blue!70!white] (i00) circle (1.8pt);
                \fill[fill=blue!70!white] (i01) circle (1.8pt);
                \fill[fill=blue!70!white] (i10) circle (1.8pt);
                \fill[fill=blue!70!white] (i11) circle (1.8pt);
                \fill[fill=blue!70!white] (j00) circle (1.8pt);
                \fill[fill=blue!70!white] (j01) circle (1.8pt);
                \fill[fill=blue!70!white] (j10) circle (1.8pt);
                \fill[fill=blue!70!white] (j11) circle (1.8pt);
                \fill[fill=black!45] (i000) circle (1.5pt);
                \fill[fill=black!45] (i001) circle (1.5pt);
                \fill[fill=black!45] (i010) circle (1.5pt);
                \fill[fill=black!45] (i011) circle (1.5pt);
                \fill[fill=black!45] (i100) circle (1.5pt);
                \fill[fill=black!45] (i101) circle (1.5pt);
                \fill[fill=black!45] (i110) circle (1.5pt);
                \fill[fill=black!45] (i111) circle (1.5pt);
                \fill[fill=black!45] (j000) circle (1.5pt);
                \fill[fill=black!45] (j001) circle (1.5pt);
                \fill[fill=black!45] (j010) circle (1.5pt);
                \fill[fill=black!45] (j011) circle (1.5pt);
                \fill[fill=black!45] (j100) circle (1.5pt);
                \fill[fill=black!45] (j101) circle (1.5pt);
                \fill[fill=black!45] (j110) circle (1.5pt);
                \fill[fill=black!45] (j111) circle (1.5pt);
                \fill[fill=blue!70!black] (i) circle (2.4pt);
                \fill[fill=blue!70!black] (j) circle (2.4pt);
                \node[below=2pt] at (i) {\scriptsize $i$};
                \node[below=2pt] at (j) {\scriptsize $j$};
            \end{tikzpicture}%
        }
    \end{minipage}\hfill
    \begin{minipage}[t]{0.48\linewidth}
        \centering
        \resizebox{\linewidth}{!}{%
            \begin{tikzpicture}
                \coordinate (i) at (-0.300,0.000);
                \coordinate (j) at (0.300,0.000);
                \coordinate (i0) at (-0.800,1.000);
                \coordinate (i00) at (-0.650,1.750);
                \coordinate (i000) at (-0.410,2.100);
                \coordinate (i001) at (-0.565,2.175);
                \coordinate (i002) at (-0.735,2.150);
                \coordinate (i01) at (-1.130,1.700);
                \coordinate (i010) at (-1.190,2.160);
                \coordinate (i011) at (-1.373,2.147);
                \coordinate (i012) at (-1.490,2.010);
                \coordinate (i02) at (-1.480,1.400);
                \coordinate (i020) at (-1.740,1.770);
                \coordinate (i021) at (-1.900,1.640);
                \coordinate (i022) at (-1.950,1.440);
                \coordinate (i1) at (-1.400,0.000);
                \coordinate (i10) at (-1.950,0.600);
                \coordinate (i100) at (-2.050,1.000);
                \coordinate (i101) at (-2.280,0.930);
                \coordinate (i102) at (-2.330,0.700);
                \coordinate (i11) at (-2.200,0.000);
                \coordinate (i110) at (-2.600,0.250);
                \coordinate (i111) at (-2.650,0.000);
                \coordinate (i112) at (-2.600,-0.250);
                \coordinate (i12) at (-1.950,-0.600);
                \coordinate (i120) at (-2.330,-0.700);
                \coordinate (i121) at (-2.280,-0.930);
                \coordinate (i122) at (-2.050,-1.000);
                \coordinate (i2) at (-0.800,-1.000);
                \coordinate (i20) at (-0.650,-1.750);
                \coordinate (i200) at (-0.410,-2.100);
                \coordinate (i201) at (-0.565,-2.175);
                \coordinate (i202) at (-0.735,-2.150);
                \coordinate (i21) at (-1.130,-1.700);
                \coordinate (i210) at (-1.190,-2.160);
                \coordinate (i211) at (-1.373,-2.147);
                \coordinate (i212) at (-1.490,-2.010);
                \coordinate (i22) at (-1.480,-1.400);
                \coordinate (i220) at (-1.740,-1.770);
                \coordinate (i221) at (-1.900,-1.640);
                \coordinate (i222) at (-1.950,-1.440);
                \coordinate (j0) at (0.800,1.000);
                \coordinate (j00) at (0.650,1.750);
                \coordinate (j000) at (0.410,2.100);
                \coordinate (j001) at (0.565,2.175);
                \coordinate (j002) at (0.735,2.150);
                \coordinate (j01) at (1.130,1.700);
                \coordinate (j010) at (1.190,2.160);
                \coordinate (j011) at (1.373,2.147);
                \coordinate (j012) at (1.490,2.010);
                \coordinate (j02) at (1.480,1.400);
                \coordinate (j020) at (1.740,1.770);
                \coordinate (j021) at (1.900,1.640);
                \coordinate (j022) at (1.950,1.440);
                \coordinate (j1) at (1.400,0.000);
                \coordinate (j10) at (1.950,0.600);
                \coordinate (j100) at (2.050,1.000);
                \coordinate (j101) at (2.280,0.930);
                \coordinate (j102) at (2.330,0.700);
                \coordinate (j11) at (2.200,0.000);
                \coordinate (j110) at (2.600,0.250);
                \coordinate (j111) at (2.650,0.000);
                \coordinate (j112) at (2.600,-0.250);
                \coordinate (j12) at (1.950,-0.600);
                \coordinate (j120) at (2.330,-0.700);
                \coordinate (j121) at (2.280,-0.930);
                \coordinate (j122) at (2.050,-1.000);
                \coordinate (j2) at (0.800,-1.000);
                \coordinate (j20) at (0.650,-1.750);
                \coordinate (j200) at (0.410,-2.100);
                \coordinate (j201) at (0.565,-2.175);
                \coordinate (j202) at (0.735,-2.150);
                \coordinate (j21) at (1.130,-1.700);
                \coordinate (j210) at (1.190,-2.160);
                \coordinate (j211) at (1.373,-2.147);
                \coordinate (j212) at (1.490,-2.010);
                \coordinate (j22) at (1.480,-1.400);
                \coordinate (j220) at (1.740,-1.770);
                \coordinate (j221) at (1.900,-1.640);
                \coordinate (j222) at (1.950,-1.440);

                \draw[black, thick] (i) -- (i0);
                \draw[black!55, thick] (i0) -- (i00);
                \draw[black!35, thin] (i00) -- (i000);
                \draw[black!35, thin] (i00) -- (i001);
                \draw[black!35, thin] (i00) -- (i002);
                \draw[black!55, thick] (i0) -- (i01);
                \draw[black!35, thin] (i01) -- (i010);
                \draw[black!35, thin] (i01) -- (i011);
                \draw[black!35, thin] (i01) -- (i012);
                \draw[black!55, thick] (i0) -- (i02);
                \draw[black!35, thin] (i02) -- (i020);
                \draw[black!35, thin] (i02) -- (i021);
                \draw[black!35, thin] (i02) -- (i022);
                \draw[black, thick] (i) -- (i1);
                \draw[black!55, thick] (i1) -- (i10);
                \draw[black!35, thin] (i10) -- (i100);
                \draw[black!35, thin] (i10) -- (i101);
                \draw[black!35, thin] (i10) -- (i102);
                \draw[black!55, thick] (i1) -- (i11);
                \draw[black!35, thin] (i11) -- (i110);
                \draw[black!35, thin] (i11) -- (i111);
                \draw[black!35, thin] (i11) -- (i112);
                \draw[black!55, thick] (i1) -- (i12);
                \draw[black!35, thin] (i12) -- (i120);
                \draw[black!35, thin] (i12) -- (i121);
                \draw[black!35, thin] (i12) -- (i122);
                \draw[black, thick] (i) -- (i2);
                \draw[black!55, thick] (i2) -- (i20);
                \draw[black!35, thin] (i20) -- (i200);
                \draw[black!35, thin] (i20) -- (i201);
                \draw[black!35, thin] (i20) -- (i202);
                \draw[black!55, thick] (i2) -- (i21);
                \draw[black!35, thin] (i21) -- (i210);
                \draw[black!35, thin] (i21) -- (i211);
                \draw[black!35, thin] (i21) -- (i212);
                \draw[black!55, thick] (i2) -- (i22);
                \draw[black!35, thin] (i22) -- (i220);
                \draw[black!35, thin] (i22) -- (i221);
                \draw[black!35, thin] (i22) -- (i222);
                \draw[black, thick] (j) -- (j0);
                \draw[black!55, thick] (j0) -- (j00);
                \draw[black!35, thin] (j00) -- (j000);
                \draw[black!35, thin] (j00) -- (j001);
                \draw[black!35, thin] (j00) -- (j002);
                \draw[black!55, thick] (j0) -- (j01);
                \draw[black!35, thin] (j01) -- (j010);
                \draw[black!35, thin] (j01) -- (j011);
                \draw[black!35, thin] (j01) -- (j012);
                \draw[black!55, thick] (j0) -- (j02);
                \draw[black!35, thin] (j02) -- (j020);
                \draw[black!35, thin] (j02) -- (j021);
                \draw[black!35, thin] (j02) -- (j022);
                \draw[black, thick] (j) -- (j1);
                \draw[black!55, thick] (j1) -- (j10);
                \draw[black!35, thin] (j10) -- (j100);
                \draw[black!35, thin] (j10) -- (j101);
                \draw[black!35, thin] (j10) -- (j102);
                \draw[black!55, thick] (j1) -- (j11);
                \draw[black!35, thin] (j11) -- (j110);
                \draw[black!35, thin] (j11) -- (j111);
                \draw[black!35, thin] (j11) -- (j112);
                \draw[black!55, thick] (j1) -- (j12);
                \draw[black!35, thin] (j12) -- (j120);
                \draw[black!35, thin] (j12) -- (j121);
                \draw[black!35, thin] (j12) -- (j122);
                \draw[black, thick] (j) -- (j2);
                \draw[black!55, thick] (j2) -- (j20);
                \draw[black!35, thin] (j20) -- (j200);
                \draw[black!35, thin] (j20) -- (j201);
                \draw[black!35, thin] (j20) -- (j202);
                \draw[black!55, thick] (j2) -- (j21);
                \draw[black!35, thin] (j21) -- (j210);
                \draw[black!35, thin] (j21) -- (j211);
                \draw[black!35, thin] (j21) -- (j212);
                \draw[black!55, thick] (j2) -- (j22);
                \draw[black!35, thin] (j22) -- (j220);
                \draw[black!35, thin] (j22) -- (j221);
                \draw[black!35, thin] (j22) -- (j222);
                \draw[red!70!black, very thick] (i) -- (j);

                \draw[black!45, thin, dashed] (i000) -- (-0.170,2.300);
                \draw[black!45, thin, dashed] (i000) -- (-0.230,2.363);
                \draw[black!45, thin, dashed] (i000) -- (-0.300,2.400);
                \draw[black!45, thin, dashed] (i001) -- (-0.410,2.460);
                \draw[black!45, thin, dashed] (i001) -- (-0.500,2.500);
                \draw[black!45, thin, dashed] (i001) -- (-0.605,2.550);
                \draw[black!45, thin, dashed] (i002) -- (-0.725,2.455);
                \draw[black!45, thin, dashed] (i002) -- (-0.820,2.450);
                \draw[black!45, thin, dashed] (i002) -- (-0.900,2.400);
                \draw[black!45, thin, dashed] (i010) -- (-1.125,2.460);
                \draw[black!45, thin, dashed] (i010) -- (-1.210,2.480);
                \draw[black!45, thin, dashed] (i010) -- (-1.290,2.450);
                \draw[black!45, thin, dashed] (i011) -- (-1.430,2.447);
                \draw[black!45, thin, dashed] (i011) -- (-1.510,2.418);
                \draw[black!45, thin, dashed] (i011) -- (-1.580,2.367);
                \draw[black!45, thin, dashed] (i012) -- (-1.670,2.260);
                \draw[black!45, thin, dashed] (i012) -- (-1.725,2.210);
                \draw[black!45, thin, dashed] (i012) -- (-1.765,2.145);
                \draw[black!45, thin, dashed] (i020) -- (-1.890,2.150);
                \draw[black!45, thin, dashed] (i020) -- (-1.970,2.090);
                \draw[black!45, thin, dashed] (i020) -- (-2.020,2.000);
                \draw[black!45, thin, dashed] (i021) -- (-2.180,1.920);
                \draw[black!45, thin, dashed] (i021) -- (-2.250,1.840);
                \draw[black!45, thin, dashed] (i021) -- (-2.300,1.750);
                \draw[black!45, thin, dashed] (i022) -- (-2.250,1.540);
                \draw[black!45, thin, dashed] (i022) -- (-2.310,1.470);
                \draw[black!45, thin, dashed] (i022) -- (-2.360,1.380);
                \draw[black!45, thin, dashed] (i100) -- (-2.050,1.400);
                \draw[black!45, thin, dashed] (i100) -- (-2.125,1.300);
                \draw[black!45, thin, dashed] (i100) -- (-2.200,1.275);
                \draw[black!45, thin, dashed] (i101) -- (-2.430,1.190);
                \draw[black!45, thin, dashed] (i101) -- (-2.510,1.160);
                \draw[black!45, thin, dashed] (i101) -- (-2.540,1.080);
                \draw[black!45, thin, dashed] (i102) -- (-2.610,0.865);
                \draw[black!45, thin, dashed] (i102) -- (-2.650,0.785);
                \draw[black!45, thin, dashed] (i102) -- (-2.725,0.705);
                \draw[black!45, thin, dashed] (i110) -- (-2.800,0.525);
                \draw[black!45, thin, dashed] (i110) -- (-2.850,0.425);
                \draw[black!45, thin, dashed] (i110) -- (-2.900,0.325);
                \draw[black!45, thin, dashed] (i111) -- (-2.950,0.100);
                \draw[black!45, thin, dashed] (i111) -- (-3.050,0.000);
                \draw[black!45, thin, dashed] (i111) -- (-2.950,-0.100);
                \draw[black!45, thin, dashed] (i112) -- (-2.900,-0.325);
                \draw[black!45, thin, dashed] (i112) -- (-2.850,-0.425);
                \draw[black!45, thin, dashed] (i112) -- (-2.800,-0.525);
                \draw[black!45, thin, dashed] (i120) -- (-2.725,-0.705);
                \draw[black!45, thin, dashed] (i120) -- (-2.650,-0.785);
                \draw[black!45, thin, dashed] (i120) -- (-2.610,-0.865);
                \draw[black!45, thin, dashed] (i121) -- (-2.540,-1.080);
                \draw[black!45, thin, dashed] (i121) -- (-2.510,-1.160);
                \draw[black!45, thin, dashed] (i121) -- (-2.430,-1.190);
                \draw[black!45, thin, dashed] (i122) -- (-2.185,-1.275);
                \draw[black!45, thin, dashed] (i122) -- (-2.125,-1.300);
                \draw[black!45, thin, dashed] (i122) -- (-2.050,-1.400);
                \draw[black!45, thin, dashed] (i200) -- (-0.300,-2.400);
                \draw[black!45, thin, dashed] (i200) -- (-0.230,-2.363);
                \draw[black!45, thin, dashed] (i200) -- (-0.170,-2.300);
                \draw[black!45, thin, dashed] (i201) -- (-0.605,-2.550);
                \draw[black!45, thin, dashed] (i201) -- (-0.500,-2.500);
                \draw[black!45, thin, dashed] (i201) -- (-0.410,-2.460);
                \draw[black!45, thin, dashed] (i202) -- (-0.900,-2.400);
                \draw[black!45, thin, dashed] (i202) -- (-0.820,-2.450);
                \draw[black!45, thin, dashed] (i202) -- (-0.725,-2.455);
                \draw[black!45, thin, dashed] (i210) -- (-1.290,-2.450);
                \draw[black!45, thin, dashed] (i210) -- (-1.210,-2.480);
                \draw[black!45, thin, dashed] (i210) -- (-1.125,-2.460);
                \draw[black!45, thin, dashed] (i211) -- (-1.580,-2.367);
                \draw[black!45, thin, dashed] (i211) -- (-1.510,-2.418);
                \draw[black!45, thin, dashed] (i211) -- (-1.430,-2.447);
                \draw[black!45, thin, dashed] (i212) -- (-1.765,-2.145);
                \draw[black!45, thin, dashed] (i212) -- (-1.725,-2.210);
                \draw[black!45, thin, dashed] (i212) -- (-1.670,-2.260);
                \draw[black!45, thin, dashed] (i220) -- (-2.020,-2.000);
                \draw[black!45, thin, dashed] (i220) -- (-1.970,-2.090);
                \draw[black!45, thin, dashed] (i220) -- (-1.890,-2.150);
                \draw[black!45, thin, dashed] (i221) -- (-2.300,-1.750);
                \draw[black!45, thin, dashed] (i221) -- (-2.250,-1.840);
                \draw[black!45, thin, dashed] (i221) -- (-2.180,-1.920);
                \draw[black!45, thin, dashed] (i222) -- (-2.360,-1.380);
                \draw[black!45, thin, dashed] (i222) -- (-2.310,-1.470);
                \draw[black!45, thin, dashed] (i222) -- (-2.250,-1.540);
                \draw[black!45, thin, dashed] (j000) -- (0.170,2.300);
                \draw[black!45, thin, dashed] (j000) -- (0.230,2.363);
                \draw[black!45, thin, dashed] (j000) -- (0.300,2.400);
                \draw[black!45, thin, dashed] (j001) -- (0.410,2.460);
                \draw[black!45, thin, dashed] (j001) -- (0.500,2.500);
                \draw[black!45, thin, dashed] (j001) -- (0.605,2.550);
                \draw[black!45, thin, dashed] (j002) -- (0.725,2.455);
                \draw[black!45, thin, dashed] (j002) -- (0.820,2.450);
                \draw[black!45, thin, dashed] (j002) -- (0.900,2.400);
                \draw[black!45, thin, dashed] (j010) -- (1.125,2.460);
                \draw[black!45, thin, dashed] (j010) -- (1.210,2.480);
                \draw[black!45, thin, dashed] (j010) -- (1.290,2.450);
                \draw[black!45, thin, dashed] (j011) -- (1.430,2.447);
                \draw[black!45, thin, dashed] (j011) -- (1.510,2.418);
                \draw[black!45, thin, dashed] (j011) -- (1.580,2.367);
                \draw[black!45, thin, dashed] (j012) -- (1.670,2.260);
                \draw[black!45, thin, dashed] (j012) -- (1.725,2.210);
                \draw[black!45, thin, dashed] (j012) -- (1.765,2.145);
                \draw[black!45, thin, dashed] (j020) -- (1.890,2.150);
                \draw[black!45, thin, dashed] (j020) -- (1.970,2.090);
                \draw[black!45, thin, dashed] (j020) -- (2.020,2.000);
                \draw[black!45, thin, dashed] (j021) -- (2.180,1.920);
                \draw[black!45, thin, dashed] (j021) -- (2.250,1.840);
                \draw[black!45, thin, dashed] (j021) -- (2.300,1.750);
                \draw[black!45, thin, dashed] (j022) -- (2.250,1.540);
                \draw[black!45, thin, dashed] (j022) -- (2.310,1.470);
                \draw[black!45, thin, dashed] (j022) -- (2.360,1.380);
                \draw[black!45, thin, dashed] (j100) -- (2.050,1.400);
                \draw[black!45, thin, dashed] (j100) -- (2.125,1.300);
                \draw[black!45, thin, dashed] (j100) -- (2.200,1.275);
                \draw[black!45, thin, dashed] (j101) -- (2.430,1.190);
                \draw[black!45, thin, dashed] (j101) -- (2.510,1.160);
                \draw[black!45, thin, dashed] (j101) -- (2.540,1.080);
                \draw[black!45, thin, dashed] (j102) -- (2.610,0.865);
                \draw[black!45, thin, dashed] (j102) -- (2.650,0.785);
                \draw[black!45, thin, dashed] (j102) -- (2.725,0.705);
                \draw[black!45, thin, dashed] (j110) -- (2.800,0.525);
                \draw[black!45, thin, dashed] (j110) -- (2.850,0.425);
                \draw[black!45, thin, dashed] (j110) -- (2.900,0.325);
                \draw[black!45, thin, dashed] (j111) -- (2.950,0.100);
                \draw[black!45, thin, dashed] (j111) -- (3.050,0.000);
                \draw[black!45, thin, dashed] (j111) -- (2.950,-0.100);
                \draw[black!45, thin, dashed] (j112) -- (2.900,-0.325);
                \draw[black!45, thin, dashed] (j112) -- (2.850,-0.425);
                \draw[black!45, thin, dashed] (j112) -- (2.800,-0.525);
                \draw[black!45, thin, dashed] (j120) -- (2.725,-0.705);
                \draw[black!45, thin, dashed] (j120) -- (2.650,-0.785);
                \draw[black!45, thin, dashed] (j120) -- (2.610,-0.865);
                \draw[black!45, thin, dashed] (j121) -- (2.540,-1.080);
                \draw[black!45, thin, dashed] (j121) -- (2.510,-1.160);
                \draw[black!45, thin, dashed] (j121) -- (2.430,-1.190);
                \draw[black!45, thin, dashed] (j122) -- (2.185,-1.275);
                \draw[black!45, thin, dashed] (j122) -- (2.125,-1.300);
                \draw[black!45, thin, dashed] (j122) -- (2.050,-1.400);
                \draw[black!45, thin, dashed] (j200) -- (0.300,-2.400);
                \draw[black!45, thin, dashed] (j200) -- (0.230,-2.363);
                \draw[black!45, thin, dashed] (j200) -- (0.170,-2.300);
                \draw[black!45, thin, dashed] (j201) -- (0.605,-2.550);
                \draw[black!45, thin, dashed] (j201) -- (0.500,-2.500);
                \draw[black!45, thin, dashed] (j201) -- (0.410,-2.460);
                \draw[black!45, thin, dashed] (j202) -- (0.900,-2.400);
                \draw[black!45, thin, dashed] (j202) -- (0.820,-2.450);
                \draw[black!45, thin, dashed] (j202) -- (0.725,-2.455);
                \draw[black!45, thin, dashed] (j210) -- (1.290,-2.450);
                \draw[black!45, thin, dashed] (j210) -- (1.210,-2.480);
                \draw[black!45, thin, dashed] (j210) -- (1.125,-2.460);
                \draw[black!45, thin, dashed] (j211) -- (1.580,-2.367);
                \draw[black!45, thin, dashed] (j211) -- (1.510,-2.418);
                \draw[black!45, thin, dashed] (j211) -- (1.430,-2.447);
                \draw[black!45, thin, dashed] (j212) -- (1.765,-2.145);
                \draw[black!45, thin, dashed] (j212) -- (1.725,-2.210);
                \draw[black!45, thin, dashed] (j212) -- (1.670,-2.260);
                \draw[black!45, thin, dashed] (j220) -- (2.020,-2.000);
                \draw[black!45, thin, dashed] (j220) -- (1.970,-2.090);
                \draw[black!45, thin, dashed] (j220) -- (1.890,-2.150);
                \draw[black!45, thin, dashed] (j221) -- (2.300,-1.750);
                \draw[black!45, thin, dashed] (j221) -- (2.250,-1.840);
                \draw[black!45, thin, dashed] (j221) -- (2.180,-1.920);
                \draw[black!45, thin, dashed] (j222) -- (2.360,-1.380);
                \draw[black!45, thin, dashed] (j222) -- (2.310,-1.470);
                \draw[black!45, thin, dashed] (j222) -- (2.250,-1.540);

                \fill[fill=blue!70!black] (i0) circle (2.2pt);
                \fill[fill=blue!70!black] (i1) circle (2.2pt);
                \fill[fill=blue!70!black] (i2) circle (2.2pt);
                \fill[fill=blue!70!black] (j0) circle (2.2pt);
                \fill[fill=blue!70!black] (j1) circle (2.2pt);
                \fill[fill=blue!70!black] (j2) circle (2.2pt);
                \fill[fill=blue!70!white] (i00) circle (1.8pt);
                \fill[fill=blue!70!white] (i01) circle (1.8pt);
                \fill[fill=blue!70!white] (i02) circle (1.8pt);
                \fill[fill=blue!70!white] (i10) circle (1.8pt);
                \fill[fill=blue!70!white] (i11) circle (1.8pt);
                \fill[fill=blue!70!white] (i12) circle (1.8pt);
                \fill[fill=blue!70!white] (i20) circle (1.8pt);
                \fill[fill=blue!70!white] (i21) circle (1.8pt);
                \fill[fill=blue!70!white] (i22) circle (1.8pt);
                \fill[fill=blue!70!white] (j00) circle (1.8pt);
                \fill[fill=blue!70!white] (j01) circle (1.8pt);
                \fill[fill=blue!70!white] (j02) circle (1.8pt);
                \fill[fill=blue!70!white] (j10) circle (1.8pt);
                \fill[fill=blue!70!white] (j11) circle (1.8pt);
                \fill[fill=blue!70!white] (j12) circle (1.8pt);
                \fill[fill=blue!70!white] (j20) circle (1.8pt);
                \fill[fill=blue!70!white] (j21) circle (1.8pt);
                \fill[fill=blue!70!white] (j22) circle (1.8pt);
                \fill[fill=black!45] (i000) circle (1.5pt);
                \fill[fill=black!45] (i001) circle (1.5pt);
                \fill[fill=black!45] (i002) circle (1.5pt);
                \fill[fill=black!45] (i010) circle (1.5pt);
                \fill[fill=black!45] (i011) circle (1.5pt);
                \fill[fill=black!45] (i012) circle (1.5pt);
                \fill[fill=black!45] (i020) circle (1.5pt);
                \fill[fill=black!45] (i021) circle (1.5pt);
                \fill[fill=black!45] (i022) circle (1.5pt);
                \fill[fill=black!45] (i100) circle (1.5pt);
                \fill[fill=black!45] (i101) circle (1.5pt);
                \fill[fill=black!45] (i102) circle (1.5pt);
                \fill[fill=black!45] (i110) circle (1.5pt);
                \fill[fill=black!45] (i111) circle (1.5pt);
                \fill[fill=black!45] (i112) circle (1.5pt);
                \fill[fill=black!45] (i120) circle (1.5pt);
                \fill[fill=black!45] (i121) circle (1.5pt);
                \fill[fill=black!45] (i122) circle (1.5pt);
                \fill[fill=black!45] (i200) circle (1.5pt);
                \fill[fill=black!45] (i201) circle (1.5pt);
                \fill[fill=black!45] (i202) circle (1.5pt);
                \fill[fill=black!45] (i210) circle (1.5pt);
                \fill[fill=black!45] (i211) circle (1.5pt);
                \fill[fill=black!45] (i212) circle (1.5pt);
                \fill[fill=black!45] (i220) circle (1.5pt);
                \fill[fill=black!45] (i221) circle (1.5pt);
                \fill[fill=black!45] (i222) circle (1.5pt);
                \fill[fill=black!45] (j000) circle (1.5pt);
                \fill[fill=black!45] (j001) circle (1.5pt);
                \fill[fill=black!45] (j002) circle (1.5pt);
                \fill[fill=black!45] (j010) circle (1.5pt);
                \fill[fill=black!45] (j011) circle (1.5pt);
                \fill[fill=black!45] (j012) circle (1.5pt);
                \fill[fill=black!45] (j020) circle (1.5pt);
                \fill[fill=black!45] (j021) circle (1.5pt);
                \fill[fill=black!45] (j022) circle (1.5pt);
                \fill[fill=black!45] (j100) circle (1.5pt);
                \fill[fill=black!45] (j101) circle (1.5pt);
                \fill[fill=black!45] (j102) circle (1.5pt);
                \fill[fill=black!45] (j110) circle (1.5pt);
                \fill[fill=black!45] (j111) circle (1.5pt);
                \fill[fill=black!45] (j112) circle (1.5pt);
                \fill[fill=black!45] (j120) circle (1.5pt);
                \fill[fill=black!45] (j121) circle (1.5pt);
                \fill[fill=black!45] (j122) circle (1.5pt);
                \fill[fill=black!45] (j200) circle (1.5pt);
                \fill[fill=black!45] (j201) circle (1.5pt);
                \fill[fill=black!45] (j202) circle (1.5pt);
                \fill[fill=black!45] (j210) circle (1.5pt);
                \fill[fill=black!45] (j211) circle (1.5pt);
                \fill[fill=black!45] (j212) circle (1.5pt);
                \fill[fill=black!45] (j220) circle (1.5pt);
                \fill[fill=black!45] (j221) circle (1.5pt);
                \fill[fill=black!45] (j222) circle (1.5pt);
                \fill[fill=blue!70!black] (i) circle (2.4pt);
                \fill[fill=blue!70!black] (j) circle (2.4pt);
                \node[below=2pt] at (i) {\scriptsize $i$};
                \node[below=2pt] at (j) {\scriptsize $j$};
            \end{tikzpicture}%
        }
    \end{minipage}
    \caption{\label{figure:TreeExamples}
        The tree $T_{d}^{p}$ is built from a centre edge $i j$ (\textcolor{red!70!black}{red}) by including every vertex within graph distance $p$ of $i$ or $j$ in a $d$-regular graph $G$ of girth $\geq 2 (p + 1)$.
        Edges and vertices are shaded by their graph distance to the centre edge, from saturated blue at distance $1$, i.e., $T_{d}^{1}$, to faded grey at distance $3$; the latter lie outside $T_{d}^{2}$, and the dashed stubs indicate that $G$ extends even further.
        Shown are $d = 3$ on the left, and $d = 4$ on the right.
    }
\end{figure*}

For large-girth graphs it is
\begin{align*}
    F(\bm{\beta}, \bm{\gamma}) = \abs{E} f^{\mathrm{tree}}_{p, d}(\bm{\beta}, \bm{\gamma}).
\end{align*}
As an immediate consequence, QAOA yields identical expectation values on all graphs with girth $\geq 2 (p + 1)$ and the same degree.
Furthermore, the parameter optimisation for the $n$-qubit system can be carried out using a single system of $\abs{T_{d}^{p}} = 2 \sum_{k = 0}^{p} (d - 1)^{k}$ qubits.

The precise value of $F(\bm{\beta}, \bm{\gamma})$ is often not that relevant, but its value relative to the optimal value is.
This quantity is called the \emph{approximation ratio}, defined as
\begin{align*}
    C(\bm{\beta}, \bm{\gamma}) \coloneqq \frac{F(\bm{\beta}, \bm{\gamma})}{F_{\max}}, \text{ where } F_{\max} \coloneqq \max_{\bm{x} \scriptin \bit^{n}} f(\bm{x}).
\end{align*}
Knowing the exact value of $F_{\max}$ evidently means that the MaxCut instance at hand is already solved.
However, this normalisation allows meaningful comparison of performance across different MaxCut instances.
One case where $F_{\max}$ is known precisely is when $G$ is also bipartite, i.e., when there exists a partition of its vertex set $V = V_{1} \mathbin{\mathaccent\cdot\cup} V_{2}$ so that every edge has one endpoint in $V_{1}$ and one in $V_{2}$.
Then, the optimal value for MaxCut is readily achieved by the partition into $V_{1}$ and $V_{2}$ itself, causing all edges of $G$ to be cut.
The subsequent corollary combines all three structural insights for $d$-regular, large-girth, bipartite graphs.

\begin{corollary}\label{corollary:WorstCaseGraph}
    For given $p, d \in \N$, let $G$ be a bipartite, $d$-regular graph with girth $\geq 2 (p + 1)$.
    Then, for every $(\bm{\beta}, \bm{\gamma}) \in \R^{2 p}$, it holds that
    \begin{align*}
        C(\bm{\beta}, \bm{\gamma}) = f_{p, d}^{\mathrm{tree}}(\bm{\beta}, \bm{\gamma}).
    \end{align*}
\end{corollary}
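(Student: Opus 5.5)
The plan is to combine the three structural facts from the preceding discussion: locality of the QAOA, the girth argument identifying every causal cone with $T_{d}^{p}$, and the value of $F_{\max}$ for bipartite graphs. Throughout, $f^{\mathrm{tree}}_{p,d}$ is read as the function $f_{ij}$ evaluated on the causal cone $T_{d}^{p}$ for its centre edge. This is how the paper uses it in the identity $F = \abs{E} f^{\mathrm{tree}}_{p,d}$.

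\textbf{Step 1: the numerator.} Fix an arbitrary $(\bm{\beta},\bm{\gamma}) \in \R^{2p}$ and write $F(\bm{\beta},\bm{\gamma}) = \sum_{ij \in E} f_{ij}(\bm{\beta},\bm{\gamma})$.
\begin{itemize}
\item By the Heisenberg-picture locality argument, each $f_{ij}$ depends only on the causal cone $G_{ij}^{p}$ from Definition~\ref{definition:CausalCone}.
\item Since $G$ is $d$-regular with girth $\geq 2(p+1)$, the girth argument shows that every $G_{ij}^{p}$ is a tree.
\item That tree contains all vertices within distance $p$ of $ij$, and every non-boundary vertex has full degree $d$. It is therefore isomorphic to $T_{d}^{p}$ (Definition~\ref{definition:TreeGraph}) via a map sending $ij$ to the centre edge.
\item Isomorphism-invariance of $f_{ij}$ then gives $f_{ij} = f^{\mathrm{tree}}_{p,d}$ for every edge, hence $F(\bm{\beta},\bm{\gamma}) = \abs{E} f^{\mathrm{tree}}_{p,d}(\bm{\beta},\bm{\gamma})$.
\end{itemize}

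\textbf{Step 2: the denominator.} Take the bipartition $V = V_1 \mathbin{\mathaccent\cdot\cup} V_2$ and the bitstring $\bm{x}$ with $x_i = 0$ for $i \in V_1$ and $x_i = 1$ for $i \in V_2$. Every edge has exactly one endpoint in each part, so each summand $x_i(1-x_j) + (1-x_i)x_j$ equals $1$, and $f(\bm{x}) = \abs{E}$. The trivial bound $f \leq \abs{E}$ gives $F_{\max} = \abs{E}$. Since $d \geq 1$, $E$ is nonempty and the quotient is well defined, so dividing yields the claim.

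\textbf{Main obstacle.} Everything here is bookkeeping except the isomorphism $G_{ij}^{p} \cong T_{d}^{p}$, which needs more care than the girth argument alone provides. That argument only shows $G_{ij}^{p}$ is a tree. One must also check that the shortest-path spanning tree rooted at $\{i,j\}$ has the right shape:
\begin{itemize}
\item For each $k < p$, every vertex at distance $k$ has exactly $d-1$ neighbours at distance $k+1$, and these children are distinct across parents.
\item Both facts follow from girth $\geq 2(p+1)$. A vertex at distance $k$ with two neighbours at distance $\leq k$, or two parents sharing a child, would close a cycle of length at most $2k+2 \leq 2p$.
\item Every vertex at distance $k$ lies on a unique side, $i$ or $j$. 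Otherwise the two root paths plus the edge $ij$ would form a cycle of length at most $2k+1$.
\end{itemize}
Since the paper already asserts this equivalence in Definition~\ref{definition:TreeGraph}, I would cite it there and sketch only this degree count if needed.
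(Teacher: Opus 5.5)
Your proposal is correct and takes the same route as the paper, which proves the corollary by combining locality, the girth argument giving $F = \abs{E} f_{p,d}^{\mathrm{tree}}$, and $F_{\max} = \abs{E}$ for bipartite $G$; your degree count only fills in the isomorphism $G_{ij}^{p} \cong T_{d}^{p}$ that the paper asserts in Definition~\ref{definition:TreeGraph}. One small slip in that count: two parents at distance $k$ on opposite sides sharing a child at distance $k+1$ close a cycle of length up to $2k+3$, not $2k+2$. This is still at most $2p+1 < 2(p+1)$, so the argument stands.
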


Throughout this article, the optimisers and optimal value of $f_{p, d}^{\mathrm{tree}}$ will be of utmost importance, so they deserve their own definition.

\begin{definition}
    For every $p, d \in \N$, let $(\bm{\beta}_{p, d}^{\mathrm{tree}}, \bm{\gamma}_{p, d}^{\mathrm{tree}}) \in \R^{2 p}$ be chosen such that
    \begin{align*}
        f_{p, d}^{\mathrm{tree}, *} \coloneqq f_{p, d}^{\mathrm{tree}}(\bm{\beta}_{p, d}^{\mathrm{tree}}, \bm{\gamma}_{p, d}^{\mathrm{tree}}) = \max_{(\bm{\beta}, \bm{\gamma}) \scriptin \R^{2 p}} f_{p, d}^{\mathrm{tree}}(\bm{\beta}, \bm{\gamma}).
    \end{align*}
\end{definition}

The combination of large-scale graph structure, to which the QAOA ansatz is blind, and the maximally high optimal value of $\abs{E}$ makes bipartite, large-girth graphs natural worst-case candidates for the QAOA, meaning the QAOA is expected to perform worse on these graphs than on every other comparable graph.
In this setting, the comparison runs over all $d$-regular graphs with the same $d$.
This is exactly the large-loop conjecture in~\cite{Wurtz2021FixedAngleConjecturesForTheQuantumApproximateOptimizationAlgorithmOnRegularMaxcutGraphs}.
Note that this is a statement comparing performances of QAOA with different parameter values, each optimal for their specific graph instance.
If true, this would give rigorous performance guarantees for the QAOA on any $d$-regular graph, solely by optimising $f_{p, d}^{\mathrm{tree}}$.
However, it would not help find the optimal (or even good) parameter values for the generic $d$-regular graph.

\begin{conjecture}\label{conjecture:LargeLoop}
    Let $p, d \in \N$, $G$ be a $d$-regular graph, and $(\bm{\beta}^{*}, \bm{\gamma}^{*})$ be a maximiser of $F$.
    Then, $C(\bm{\beta}^{*}, \bm{\gamma}^{*}) \geq f_{p, d}^{\mathrm{tree}, *}$ with equality if and only if $G$ is bipartite and has girth $\geq 2 (p + 1)$.
\end{conjecture}

A much stronger and practically relevant conjecture is that the inequality in \autoref{conjecture:LargeLoop} remains valid when reusing the tree-optimal parameters $\bm{\beta}_{p, d}^{\mathrm{tree}}$, $\bm{\gamma}_{p, d}^{\mathrm{tree}}$ instead of a graph's optimal parameters.
This is precisely the fixed-angle conjecture in~\cite{Wurtz2021FixedAngleConjecturesForTheQuantumApproximateOptimizationAlgorithmOnRegularMaxcutGraphs}.

\begin{conjecture}\label{conjecture:FixedAngle}
    Let $p, d \in \N$ and $G$ be a $d$-regular graph.
    Then, $C(\bm{\beta}_{p, d}^{\mathrm{tree}}, \bm{\gamma}_{p, d}^{\mathrm{tree}}) \geq f_{p, d}^{\mathrm{tree}, *}$ with equality if and only if $G$ is bipartite and has girth $\geq 2 (p + 1)$.
\end{conjecture}

This conjecture (without the bipartite requirement on $G$) is grounded in the following special case, proven in~\cite{Wurtz2021MaxcutQuantumApproximateOptimizationAlgorithmPerformanceGuaranteesFor}.

\begin{proposition}[$p \leq 2$; $d = 3$, Wurtz--Love \cite{Wurtz2021MaxcutQuantumApproximateOptimizationAlgorithmPerformanceGuaranteesFor}]\label{proposition:ThreeRegularFixedAngle}
    For $p~\leq~2$ and all $3$-regular graphs $G$, it is $C(\bm{\beta}_{p, 3}^{\mathrm{tree}}, \bm{\gamma}_{p, 3}^{\mathrm{tree}})~\geq~f_{p, 3}^{\mathrm{tree}, *}$ with equality if and only if $G$ is bipartite and has girth $\geq 2 (p + 1)$.
\end{proposition}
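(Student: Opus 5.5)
The plan follows Wurtz and Love's locality-based subgraph enumeration. By the decomposition $F = \sum_{ij \in E} f_{ij}$ and the fact that $f_{ij}$ depends only on the isomorphism class of the rooted causal cone $G_{ij}^{p}$, the task reduces to three steps. First, enumerate all rooted graphs $(G_{ij}^{p}, ij)$ that can occur as causal cones of an edge in a $3$-regular graph, with $p \le 2$. Second, evaluate each $f_{ij}$ at the fixed tree-optimal angles $(\bm{\beta}_{p,3}^{\mathrm{tree}}, \bm{\gamma}_{p,3}^{\mathrm{tree}})$. Third, bound $F_{\max}$ from above in terms of the same local data. The tree cone $T_{3}^{p}$ contributes exactly $f_{p,3}^{\mathrm{tree},*}$ per edge. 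The key is to assign to each non-tree cone a ``credit'' large enough to pay for the edges that any cut must leave uncut.

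For the upper bound on $F_{\max}$ I would use odd cycles. Every odd cycle must contain at least one uncut edge. Hence, if an edge $ij$ lies on a short odd cycle (a triangle or a $5$-cycle, which are visible in $G_{ij}^{p}$ for $p\le 2$), that cycle's deficit can be distributed over its edges. Concretely, I aim for a weighting $w_{ij}\in[0,1]$ with $F_{\max} \le \sum_{ij} w_{ij}$. Here $w_{ij}=1$ is the trivial bound, and $w_{ij} = 1 - 1/(\text{number of edges on a chosen odd cycle})$-type discounts apply when $ij$ sits on a short odd cycle, chosen so that the deficits are not double counted. A disjoint packing argument on triangles and pentagons, or a fractional version via LP duality, would make this rigorous. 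The claim $C \ge f_{p,3}^{\mathrm{tree},*}$ then follows from the per-edge inequality
\[
  f_{ij}(\bm{\beta}_{p,3}^{\mathrm{tree}}, \bm{\gamma}_{p,3}^{\mathrm{tree}}) \ge w_{ij}\, f_{p,3}^{\mathrm{tree},*},
\]
which I check cone by cone. Summing gives $F \ge f_{p,3}^{\mathrm{tree},*} \sum w_{ij} \ge f_{p,3}^{\mathrm{tree},*} F_{\max}$.

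The case $p=1$ is a warm-up: only three cone types exist (edge in $0$, $1$ or $2$ triangles), and the closed-form $p=1$ expression gives each value explicitly. The case $p=2$ requires a finite computer enumeration of all cones of radius $2$ around an edge in a cubic graph, each with at most $14$ vertices. For each cone I would compute $f_{ij}$ numerically at the tree angles, with enough precision or interval arithmetic to make the inequality rigorous. The result should be that every non-tree cone, including bipartite ones with $4$- or $6$-cycles inside the cone, strictly beats $f_{2,3}^{\mathrm{tree},*}$ even with $w_{ij}=1$, apart from a few triangle and pentagon cones for which the discount is needed.

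The equality case follows from strictness. Equality forces every edge to have a tree cone, which means girth $\ge 2(p+1)$, and forces $F_{\max}=\abs{E}$, which means bipartite. The converse is \autoref{corollary:WorstCaseGraph}.

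The main obstacle is the second step: designing the weights $w_{ij}$ so that the upper bound on $F_{\max}$ is valid globally while remaining consistent with the local per-edge inequalities. Odd cycles of length $\ge 7$ are invisible at $p\le 2$, so the bound must rely only on triangles and pentagons. Overlapping short odd cycles also make the deficit accounting delicate. I would first try assigning each triangle a deficit of $1/3$ per edge and each pentagon $1/5$ per edge, restricted to an edge-disjoint maximal family. I would then verify numerically that the cone inequality holds even under the worst-case credit assignment.
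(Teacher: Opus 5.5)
The paper gives no proof of this proposition; it cites Wurtz--Love. Your outline reproduces their locality-plus-exhaustive-subgraph method: cone enumeration, evaluation at fixed angles, odd-cycle credits against $F_{\max}$, and strictness for the equality case. That architecture is right, but the concrete credit scheme you propose fails.

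\textbf{The credit scheme.} Credits from a maximal edge-disjoint family of triangles and pentagons are not a function of the cone $G_{ij}^{p}$. A cone-by-cone check therefore has to assume the worst case, and there a triangle edge can receive no credit at all. In $K_{4}$, or in a copy of $K_{4}$ minus an edge inside a cubic graph, the triangles overlap and only one can be selected. The remaining edges with $\lambda = 2$ (resp.\ $\lambda = 1$) are left with $w_{ij} = 1$. Already at $p = 1$, the formula in the proof of \autoref{theorem:POneFixedAngle} gives $f_{ij} = f_{1,3}^{\mathrm{tree},*} - \delta_{e}(\lambda)$ with $\delta_{e}(1) = 1/18$ and $\delta_{e}(2) = 1/9$. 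So $f_{ij} \geq w_{ij} f_{1,3}^{\mathrm{tree},*}$ is false for these edges, and your planned verification cannot pass.

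What you need is a fractional credit that is determined locally. This is exactly the double-counting step $t \leq (d-1)\nu$ in \autoref{theorem:POneFixedAngle}. Let $N$ bound the number of triangles and pentagons through any edge of a cubic graph. Every such cycle contains an uncut edge of a maximum cut, so there are at most $N\nu$ of them. Then $w_{ij} \coloneqq 1 - \sum_{C \ni ij} 1/(N\abs{C})$ satisfies $\sum_{ij} w_{ij} \geq F_{\max}$ and depends only on $G_{ij}^{2}$, since every triangle and pentagon through $ij$ lies in the cone. For $p = 1$ (triangles only, $N = 2$, $w_{ij} = 1 - \lambda/6$), the ratios $f_{ij}/f_{1,3}^{\mathrm{tree},*} \approx 0.920$ and $0.840$ exceed $5/6$ and $2/3$. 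For $p = 2$ this gives a genuinely finite check. However, you must verify rather than predict that this credit suffices. You must also verify that every bipartite non-tree cone beats $f_{2,3}^{\mathrm{tree},*}$ on its own, since it receives no credit.

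\textbf{Rigour at $p = 2$.} ``Interval arithmetic at the tree angles'' does not make the $p = 2$ computation rigorous. At $p = 1$ the optimiser $(\pi/8, \arctan(1/\sqrt{2}))$ is explicit. By contrast, $(\bm{\beta}_{2,3}^{\mathrm{tree}}, \bm{\gamma}_{2,3}^{\mathrm{tree}})$ is known only numerically and need not be unique. An enclosure computed at a numerically optimised point says nothing about the true maximiser.

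The paper meets exactly this obstacle in \autoref{theorem:Counterexample}. It circumvents it by certifying the target inequality on the whole superlevel set $\{f_{2,9}^{\mathrm{tree}} \geq b\}$, where $b$ is a certified lower bound on the optimum, using an Arb box covering of the symmetry-reduced domain. You need the analogue: for every non-tree cone type, certify $f_{ij} > w_{ij} f_{2,3}^{\mathrm{tree}}$ at every parameter point with $f_{2,3}^{\mathrm{tree}} \geq b$. At any global maximiser this yields the per-edge inequality. Its strictness is exactly what your equality argument needs.
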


The subsequent analysis will occasionally be easier to conduct with the difference of edge expectation values and in terms of uncut edges, rather than cut edges.
Accordingly, let me define for an edge $e \in E$ of a $d$-regular graph the quantity $\delta_{e} \coloneqq f_{p, d}^{\mathrm{tree}, *} - f_{e}(\bm{\beta}_{p, d}^{\mathrm{tree}}, \bm{\gamma}_{p, d}^{\mathrm{tree}})$.
For the graph $G$ itself, let $\nu \coloneqq \abs{E} - F_{\max}$ be the number of edges that any maximum cut leaves uncut.
Then,
\begin{align*}
    C(\bm{\beta}_{p, d}^{\mathrm{tree}}, \bm{\gamma}_{p, d}^{\mathrm{tree}}) \geq f_{p, d}^{\mathrm{tree}, *} \iff \sum_{e \scriptin E} \delta_{e} \leq f_{p, d}^{\mathrm{tree}, *} \nu
\end{align*}
with equality on the left-hand side if and only if the right-hand side is an equality.

\section{\label{section:ValidationOfMoreSpecialCases}Validation of more special cases}

In this section, I am extending \autoref{proposition:ThreeRegularFixedAngle} to more $(p, d)$-pairs for which the fixed-angle conjecture holds true, before refuting the general validity in \autoref{section:ConcreteCounterexample}.
The first result establishes the statement for $p = 1$ and every $d \in \N$.
It directly utilises the analytic expression for $f_{i j}(\beta, \gamma)$ as well as the optimising parameter values for $f_{1, d}^{\mathrm{tree}}(\beta, \gamma)$ found by Wang~\textit{et~al.}~\cite{Wang2018QuantumApproximateOptimizationAlgorithmForMaxcutAFermionicView}.
Besides having a concrete analytical expression, taking $p = 1$ further simplifies the analysis, since bipartite graphs automatically have girth $\geq 4$, i.e., are triangle-free.
In a nutshell, the following proof utilises the simple form of $\delta_{e}$ for the $p = 1$ case, employs fundamental binomial identities to establish monotonic behaviour of $\delta_{e}$ in the number of triangles containing edge $e$, and combines it with other graph-specific inequalities into a final upper bound of $\sum_{e \scriptin E} \delta_{e}$ by $f_{1, d}^{\mathrm{tree}, *} \nu$, which becomes tight exactly in the bipartite, hence triangle-free case.

\begin{theorem}[$p = 1$; all $d$]\label{theorem:POneFixedAngle}
    For all $d \in \N$ and $d$-regular graphs $G$, it is $C(\bm{\beta}_{1, d}^{\mathrm{tree}}, \bm{\gamma}_{1, d}^{\mathrm{tree}}) \geq f_{1, d}^{\mathrm{tree}, *}$ with equality if and only if $G$ is bipartite and has girth $\geq 4$.
\end{theorem}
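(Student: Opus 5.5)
The plan is to make everything explicit at $p = 1$ and then reduce the statement to a counting inequality between triangles and uncut edges. First, I would take Wang \textit{et~al.}'s closed form for an edge $e = uv$ of a $d$-regular graph lying in $t_e$ triangles:
\begin{align*}
    f_{e}(\beta, \gamma) = \tfrac{1}{2} + \tfrac{1}{2} \sin(4\beta) \sin\gamma \cos^{d-1}\gamma - \tfrac{1}{4} \sin^{2}(2\beta) \cos^{2(d-1-t_e)}\gamma \bigl(1 - \cos^{t_e}(2\gamma)\bigr).
\end{align*}
Setting $t_e = 0$ gives $f_{1,d}^{\mathrm{tree}}$. Its maximiser is $\beta = \pi/8$ and $\tan\gamma = 1/\sqrt{d-1}$, so that $\cos^{2}\gamma = (d-1)/d$ and $\cos 2\gamma = (d-2)/d$. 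This yields
\begin{align*}
    f_{1,d}^{\mathrm{tree},*} = \tfrac{1}{2} + \tfrac{1}{2\sqrt{d}} \bigl(\tfrac{d-1}{d}\bigr)^{(d-1)/2}, \qquad \delta_e = \delta(t_e) \coloneqq \tfrac{1}{8} \bigl(\tfrac{d-1}{d}\bigr)^{d-1-t_e} \Bigl(1 - \bigl(\tfrac{d-2}{d}\bigr)^{t_e}\Bigr).
\end{align*}
The cases $d = 1, 2$ are then checked by hand. Note that $\delta(0) = 0$, and that $\delta$ is increasing in $t$, since both factors increase. By the reformulation at the end of \autoref{section:Preliminaries}, it suffices to prove $\sum_{e} \delta(t_e) \leq f_{1,d}^{\mathrm{tree},*}\, \nu$, with equality exactly in the bipartite case.

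Next, I would dispose of the equality case and the triangle-free case. If $G$ is triangle-free, every $\delta_e = 0$, so the inequality is trivial. Equality then holds iff $\nu = 0$, i.e.\ iff $G$ is bipartite; bipartite graphs automatically have girth $\geq 4$. If $G$ contains a triangle, then $\nu > 0$, and I aim for a strict inequality.

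For the main inequality I would fix a maximum cut and write $\nu_u$ for the number of uncut edges at $u$. Maximality gives $\nu_u \leq d/2$ and $\sum_u \nu_u = 2\nu$. Every triangle contains at least one uncut edge. Hence a cut edge $uv$ satisfies $t_{uv} \leq \nu_u + \nu_v$, since the third vertex $w$ must make $uw$ or $vw$ uncut. An uncut edge satisfies $t_e \leq d-1$. I would then split the sum into cut and uncut edges.
\begin{itemize}
    \item For uncut edges, bound $\delta(t_e) \leq \delta(d-1) = \tfrac{1}{8}\bigl(1 - (\tfrac{d-2}{d})^{d-1}\bigr)$.
    \item For cut edges, use the exact shape of $\delta$, namely the factor $(\tfrac{d-1}{d})^{d-1-t}$, which is about $e^{-1}$ when $t$ is small. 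Combine it with $1 - x^{t} \leq t(1-x) = 2t/d$ to get $\delta(t) \leq \tfrac{t}{4d} (\tfrac{d-1}{d})^{d-1-t}$.
\end{itemize}
Summing $\nu_u + \nu_v$ over the cut edges gives at most $\sum_u \nu_u (d - \nu_u) \leq 2\nu d$. The goal is to show that the total is at most $\nu$ times a constant that stays below $\tfrac{1}{2} < f_{1,d}^{\mathrm{tree},*}$.

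The main obstacle is exactly this last estimate. The crude bounds give a constant of roughly $3/4$, which exceeds $f_{1,d}^{\mathrm{tree},*}$, so the argument must exploit two features. First, $\delta$ is strongly damped by $(\tfrac{d-1}{d})^{d-1-t}$ unless $t$ is large, and large $t$ on a cut edge forces many uncut edges nearby. Second, the trade-off $\nu_u \leq d/2$ prevents $t$ from being large on many cut edges at once. My first attempt would be a charging scheme: charge each triangle-bearing cut edge to the uncut edges at its endpoints, and then optimise a one-variable inequality in $s = \nu_u + \nu_v \in [0, d]$ using convexity of $s \mapsto \delta(s)$. If that fails, I would fall back to a vertex-local inequality: bound $\sum_{e \ni u} \delta_e$ in terms of $\nu_u$ and verify it by elementary binomial and trigonometric estimates for all $d$, with small $d$ checked directly.
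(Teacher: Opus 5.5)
There is a genuine gap: the proposal stops at exactly the step that carries the proof. Your preliminaries are correct: the formula for $\delta(t)$, the reduction to $\sum_e \delta_e \le f_{1,d}^{\mathrm{tree},*}\nu$, the triangle-free and equality cases, $\nu_u \le d/2$, and $t_{uv} \le \nu_u + \nu_v$. The problem is the only quantitative bound you derive for cut edges, $\delta(t) \le \frac{t}{4d}(\frac{d-1}{d})^{d-1-t}$. It linearises the second factor of $\delta$ by itself, and once the damping factor is bounded by its maximum $1$ (attained at $t = d-1$), the slope per triangle incidence is $\frac{1}{4d}$. For large $d$ this is roughly $2/(1-e^{-2}) \approx 2.3$ times the true worst-case slope $\delta(d-1)/(d-1)$. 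As you observe, this gives a constant above $\frac12$, which fails for large $d$ because $f_{1,d}^{\mathrm{tree},*} \to \frac12$. The charging scheme and the vertex-local inequality are only announced, not carried out, so the main inequality is not proved.

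The missing idea is to bound $\delta$ as a whole by its chord from $t=0$ to $t=d-1$. With $r = \frac{d-1}{d}$ and $\varepsilon = \frac{1}{d-1}$ one has $\delta(t) = \frac{r^{d-1}}{8}\bigl((1+\varepsilon)^{t} - (1-\varepsilon)^{t}\bigr)$. Its second differences are, up to the factor $\frac{r^{d-1}}{8}$, equal to $\varepsilon^{2}\bigl((1+\varepsilon)^{t} - (1-\varepsilon)^{t}\bigr) \ge 0$. So $\delta$ is discretely convex with $\delta(0) = 0$, hence $\delta(t)/t$ is non-decreasing for $t \ge 1$; the paper derives the same fact via the binomial expansion and Pascal's rule. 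This gives $\delta(t) \le \frac{t}{d-1}\delta(d-1)$ for $0 \le t \le d-1$. No cut/uncut split or $\nu_u$ bookkeeping is then needed. Let $T$ be the number of triangles. Then $\sum_e t_e = 3T$, and $T \le (d-1)\nu$ because every triangle contains an uncut edge and every edge lies in at most $d-1$ triangles. Hence
\begin{align*}
\sum_{e} \delta_{e} \le \frac{3\delta(d-1)}{d-1}\, T \le 3\delta(d-1)\,\nu = \tfrac{3}{8}\Bigl(1 - \bigl(\tfrac{d-2}{d}\bigr)^{d-1}\Bigr)\nu \le \tfrac{3}{8}\nu,
\end{align*}
which is strictly below $f_{1,d}^{\mathrm{tree},*}\nu$ whenever $\nu > 0$; this is the paper's argument. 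The same chord bound would also rescue your split. There, cut-edge triangle incidences number at most $2d\nu$, and the $\nu$ uncut edges contribute at most $\delta(d-1)$ each. Together this gives the constant $\frac{3d-1}{8(d-1)}\bigl(1 - (\frac{d-2}{d})^{d-1}\bigr)$, which is $< \frac12$ for $d \ge 3$. So the flaw was the per-incidence slope, not your counting.
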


\begin{proof}
    Let $d \in \N$ be arbitrary, $G$ be a graph, and $i j \in E$ be an edge between two degree-$d$ vertices $i, j \in V$.
    Then, by \cite[Theorem 1]{Wang2018QuantumApproximateOptimizationAlgorithmForMaxcutAFermionicView}, it is
    \begin{align*}
        f_{i j}(\beta, \gamma) &= \frac{1}{2} \big(1 + \sin(4 \beta) \sin(\gamma) \cos^{d - 1}(\gamma)\big) \\
        &\quad - \frac{1}{4} \sin(2 \beta)^{2} \cos^{2 (d - \lambda - 1)}(\gamma)\big(1 - \cos^{\lambda}(2 \gamma)\big)
    \end{align*}
    for all $(\beta, \gamma) \in \R^{2}$, where $\lambda \coloneqq \abs{N(i) \cap N(j)}$ is the number of triangles in $G$ containing $i j$.
    In particular, for the triangle-free $T_{d}^{1}$, i.e., where $\lambda = 0$, it is
    \begin{align}\label{equation:OptimalTreeValue}
        f_{1, d}^{\mathrm{tree}}(\beta, \gamma) = \frac{1}{2} \big(1 + \sin(4 \beta) \sin(\gamma) \cos^{d - 1}(\gamma)\big).
    \end{align}
    Note that every $1$-regular graph is trivially bipartite and has girth $= \infty$ so that, by \autoref{corollary:WorstCaseGraph}, there is nothing to show for $d = 1$.
    Therefore, in the following, let $d \geq 2$.
    Following \cite[Corollary 1]{Wang2018QuantumApproximateOptimizationAlgorithmForMaxcutAFermionicView}, one pair of globally optimal parameters is $(\beta_{1, d}^{\mathrm{tree}}, \gamma_{1, d}^{\mathrm{tree}}) = (\pi / 8, \arctan(1 / \sqrt{d - 1}))$, producing the optimal value
    \begin{align}\label{equation:TreeOptimalInequality}
        f_{1, d}^{\mathrm{tree}, *} = \frac{1}{2} + \frac{(d - 1)^{(d - 1) / 2}}{2 d^{d / 2}} > \frac{1}{2}.
    \end{align}
    Now, let $G$ be any $d$-regular graph.
    Note that the right-hand side of~\eqref{equation:OptimalTreeValue} appears for all $f_{e}(\beta_{1, d}^{\mathrm{tree}}, \gamma_{1, d}^{\mathrm{tree}})$, $e \in E$, so that
    \begin{align*}
        \delta_{e}(\lambda) &= \frac{\sin^{2}(2 \beta_{1, d}^{\mathrm{tree}})}{4} \cos^{2 (d - \lambda - 1)}(\gamma_{1, d}^{\mathrm{tree}}) \big(1 - \cos^{\lambda}(2 \gamma_{1, d}^{\mathrm{tree}})\big) \\
        &= \tfrac{1}{8} \left(\tfrac{d - 1}{d}\right)^{d - \lambda - 1} \left(1 - \left(\tfrac{d - 2}{d}\right)^{\lambda}\right).
    \end{align*}
    Substitute $r \coloneqq (d - 1) / d$ and $\varepsilon \coloneqq 1 / (d - 1) = 1 / (d r)$, then
    \begin{align*}
        \delta_{e}(\lambda) &= \tfrac{r^{d - 1}}{8} \big((1 + \varepsilon)^{\lambda} - (1 - \varepsilon)^{\lambda}\big) \\
        &= \frac{r^{d - 1}}{4} \sum_{k = 1}^{\infty} \binom{\lambda}{2 k - 1} \varepsilon^{2 k - 1} \\
        &= \frac{r^{d - 1} \lambda}{4} \sum_{k = 1}^{\infty} \binom{\lambda - 1}{2 k - 2} \frac{\varepsilon^{2 k - 1}}{2 k - 1}
    \end{align*}
    by the binomial theorem.
    By Pascal's identity, it is
    \begin{align*}
        \binom{\lambda - 1}{2 k - 2} = \binom{\lambda - 2}{2 k - 2} + \binom{\lambda - 2}{2 k - 3} \geq \binom{(\lambda - 1) - 1}{2 k - 2},
    \end{align*}
    hence $\delta_{e}(\lambda)$ and, more importantly, $\delta_{e}(\lambda) / \lambda$ are non-decreasing in $\lambda \geq 1$.
    Additionally, note that $\lambda \leq d - 1$ so that, in summary,
    \begin{align*}
        &\delta_{e}(\lambda) / \lambda \leq \delta_{e}(d - 1) / (d - 1) \\
        &\implies \delta_{e}(\lambda) \leq c(d) \lambda \coloneqq \frac{\lambda}{8 (d - 1)} \left(1 - \left(\tfrac{d - 2}{d}\right)^{d - 1}\right)
    \end{align*}
    for all $\lambda \geq 1$.
    For $\lambda = 0$, it is simply $\delta_{e}(0) = 0 = c(d) \cdot 0$.
    Let $t$ be the number of triangles in $G$, and let $U \subset E$ be the uncut set of any maximum cut of $G$, i.e., it is $\abs{U} = \nu$.
    Every triangle contains at least one edge in $U$ and every edge $e \in E$ belongs to at most $d - 1$ triangles, so that, in total, $t \leq (d - 1) \nu$.
    Combining all these inequalities yields
    \begin{align*}
        &\sum_{e \scriptin E} \delta_{e} \leq c(d) \sum_{e \scriptin E} \lambda_{e} = 3 c(d) t \\
        &\leq 3 c(d) (d - 1) \nu = \tfrac{3}{8} \left(1 - \left(\tfrac{d - 2}{d}\right)^{d - 1}\right) \nu \leq \tfrac{3}{8} \nu \leq f_{1, d}^{\mathrm{tree}, *} \nu.
    \end{align*}
    If $\nu > 0$, the last inequality is strict by~\eqref{equation:TreeOptimalInequality}.
    In contrast, $\nu = 0$ if and only if $G$ is bipartite and, as a bipartite graph, has no triangles, i.e., has girth $\geq 4$, exactly establishing the equality case.
\end{proof}

Recent work by Marwaha~\cite{Marwaha2026TheQaoaOnTheRingOfDisagrees} has dealt extensively with the $d = 2$ case for arbitrary $p$, and builds on the free-fermionic representation established by Wang~\textit{et~al.}~\cite{Wang2018QuantumApproximateOptimizationAlgorithmForMaxcutAFermionicView}.
Marwaha was able to affirmatively resolve a conjecture already posed by Farhi~\textit{et~al.}~\cite{Farhi2014AQuantumApproximateOptimizationAlgorithm} about QAOA's performance on connected $2$-regular graphs, i.e., on rings.
The main theorems are therefore about the approximation ratio achieved by the graph-specific optimal parameters, but the machinery behind it is also powerful enough to fully settle the fixed-angle conjecture for $2$-regular graphs.
I will not reproduce all details of the construction and only recall its most important cornerstones.
In essence, Wang~\textit{et~al.} showed that running a single QAOA instance on a ring is equivalent to simultaneously running a QAOA instance on every other qubit of the ring.
The ring's structure, i.e., its length and its parity, translates to qubit-specific mode angles within the single-qubit objective Hamiltonians.
Therefore, the global QAOA's expectation value $F(\bm{\beta}, \bm{\gamma})$ becomes a sum of single-qubit QAOA expectation values $F_{j}(\bm{\beta}, \bm{\gamma}; \theta_{j})$ with respect to different, angle-dependent objective Hamiltonians.
Marwaha additionally identifies the parameters $(\bm{\beta}, \bm{\gamma})$ with Laurent polynomials, i.e., (complex) polynomials in $z$ and $z^{-1}$, by casting the single-qubit QAOA instances into the language of quantum signal processing.
Most importantly, the parameter-specific Laurent polynomial does not depend on the underlying graph structure so that the very simple Laurent polynomials corresponding to tree-optimal parameters $(\bm{\beta}_{p, 2}^{\mathrm{tree}}, \bm{\gamma}_{p, 2}^{\mathrm{tree}})$ can be reused for general rings (and unions thereof).
Simple trigonometric identities, exploiting the graph-specific mode angles and the Laurent polynomials' simple structure, then suffice for lower-bounding the QAOA's approximation ratio by its performance on the tree.
The methods slightly differ for even-length rings, which are always bipartite, and odd-length rings, but are similar in spirit.

\begin{theorem}[all $p$; $d \leq 2$]\label{theorem:SmallDFixedAngle}
    For all $p~\in~\N$,~$d~\leq~2$,~and $d$-regular graphs $G$, it is $C(\bm{\beta}_{p, d}^{\mathrm{tree}}, \bm{\gamma}_{p, d}^{\mathrm{tree}}) \geq f_{p, d}^{\mathrm{tree}, *}$ with~equality if and only if $G$ is bipartite and has girth $\geq 2 (p + 1)$.
\end{theorem}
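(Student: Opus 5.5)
The case $d = 1$ is immediate: every $1$-regular graph is a perfect matching, hence bipartite and acyclic, and \autoref{corollary:WorstCaseGraph} gives equality. For $d = 2$, a $2$-regular graph is a disjoint union of rings $C_{n_1}, \dots, C_{n_m}$. Both $F$ and $F_{\max}$ are additive over components, and $C$ is a ratio of these sums. It therefore suffices to prove, for every single ring $C_n$, the per-component inequality $F_{C_n}(\bm{\beta}_{p,2}^{\mathrm{tree}}, \bm{\gamma}_{p,2}^{\mathrm{tree}}) \geq f_{p,2}^{\mathrm{tree},*}\, F_{\max}(C_n)$, with equality exactly when $n$ is even and $n \geq 2(p+1)$. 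Summing these per-ring inequalities then yields the claim, and equality for $G$ holds iff it holds for every ring.

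The first ingredient is an identification of the tree-optimal parameters. The causal cone $T_2^p$ is a path on $2p+2$ vertices. Here I would invoke Marwaha's correspondence: one optimal choice at $p$ layers is the QSP sequence whose Laurent polynomial is the simple ``ring of disagrees'' polynomial, giving $f_{p,2}^{\mathrm{tree},*} = \frac{2p+1}{2p+2}$. This matches Farhi's ring-of-disagrees value.

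Next I would express $F_{C_n}$ through the fermionic decomposition of Wang \emph{et al.} This gives $F_{C_n}(\bm{\beta},\bm{\gamma}) = \sum_{k} F(\bm{\beta},\bm{\gamma};\theta_k)$, a sum of single-qubit QAOA values. The mode angles are $\theta_k = \pi(2k+1)/n$ or $2\pi k/n$, depending on the parity sector. Each summand is a trigonometric polynomial in $\theta_k$ of degree at most $2p+1$, and its coefficients are fixed by the graph-independent Laurent polynomial of the tree-optimal parameters.

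The two parities are then handled separately.
\begin{itemize}
\item \emph{Even $n$.} Here $F_{\max} = n$. Averaging a trigonometric polynomial of degree $\leq 2p+1$ over $n$ equally spaced angles reproduces its continuous average exactly whenever $n > 2p+1$, i.e.\ $n \geq 2(p+1)$. That continuous average is precisely the tree value, which gives equality. For $n \leq 2p$, aliasing contributes additional Fourier coefficients. I would show these are nonnegative for the ring-of-disagrees polynomial (it has an explicit Fejér-type structure), which makes the inequality strict. Alternatively, for small even $n$ one can note that the chosen parameters prepare a state whose cut value exceeds the tree bound.
\item \emph{Odd $n$.} Here $F_{\max} = n-1$. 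The required inequality becomes $\sum_e \delta_e \leq \frac{2p+1}{2p+2}$ in the notation of the paper. The same Fourier-aliasing computation bounds $n f_{p,2}^{\mathrm{tree},*} - F_{C_n}$ by the single aliased coefficient. I would estimate that coefficient explicitly and show it is strictly less than $f_{p,2}^{\mathrm{tree},*}$, using that the tree polynomial's coefficients are bounded by $\frac{1}{2p+2}$ in magnitude.
\end{itemize}

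\textbf{Main obstacle.} The hard part is the odd-ring and short-even-ring bookkeeping. One must write out the tree-optimal Laurent polynomial's Fourier coefficients explicitly and check the sign and magnitude of the aliased terms. I would first try to prove that the single-mode function $F(\bm{\beta}^{\mathrm{tree}},\bm{\gamma}^{\mathrm{tree}};\theta)$ equals $\frac{1}{2} + \frac{1}{2}\cdot\frac{1}{p+1}\sum_{m=0}^{p}\cos\bigl((2m+1)\theta\bigr)$ (up to sign conventions), a Dirichlet-kernel form. The required sums over the discrete angles can then be evaluated in closed form with elementary trigonometric identities.
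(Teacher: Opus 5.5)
Your reduction to single rings, the $d=1$ case, and the exact-averaging argument for $n \geq 2p+2$ are fine. The computations meant to carry the short-ring cases, however, are wrong as stated.

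\emph{The single-mode formula.} Your guess $\frac12 + \frac{1}{2(p+1)}\sum_{m=0}^{p}\cos((2m+1)\theta)$ has continuous mean $1/2$, but it must average to $f_{p,2}^{\mathrm{tree},*} = (2p+1)/(2p+2)$. The Dirichlet-type sum is the \emph{amplitude}, $L(e^{i\theta/2}) = \frac{e^{i\phi}}{p+1}\sum_{m=0}^{p}\cos\frac{(2m+1)\theta}{2}$. With $N = 2p+2$, the per-mode cut value is $1 - |L(e^{i\theta/2})|^2 = 1 - \frac{\sin^2((p+1)\theta)}{N^2\sin^2(\theta/2)}$. This is one minus a normalised Fej\'er kernel, so its non-constant Fourier coefficients are $-(N-|s|)/N^2 < 0$.

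\emph{Even rings.} Consequently the aliased coefficients are all \emph{negative}, not nonnegative. On an unshifted grid $2\pi j/n$, aliasing would push short even rings below the tree value: for $p=2$, $n=4$ one would get $13/18 < 5/6$. The even case works only because even rings are sampled on the half-shifted grid $\theta_j = \pi(2j-1)/n$. That grid attaches the sign $(-1)^t$ to the aliased frequency $tn$, so the uncut fraction changes by $\frac{2}{N^2}\sum_{1 \leq t < N/n}(-1)^t (N - tn)$. This is negative by an alternating-series argument, using that $N - tn$ strictly decreases. Your plan neither assigns the two grids to the ring parities nor contains this sign mechanism.

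\emph{Odd rings.} These use the periodic grid, which includes $\theta = 0$ where $|L(1)|^2 = 1$. There are $T = \lfloor (2p+1)/n \rfloor$ aliased frequencies, not a single one; for the triangle this is about $2p/3$. A bound of $1/N$ per coefficient of $|L|^2$ only gives $n f_{p,2}^{\mathrm{tree},*} - F \leq 2nT/N$. That can reach $2(N-1)/N > f_{p,2}^{\mathrm{tree},*}$, as for the triangle at $p=1$. You need the exact decay $(N-tn)/N^2$, which follows from $|c| = 1/N$ by autocorrelation. The required inequality is then $\frac{2n}{N^2}\sum_{t=1}^{T}(N - tn) < \frac{N-1}{N}$. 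With $r = N - nT \in [1, n]$, this is equivalent to $r(r-1) + nT(n-1) > 0$.

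\emph{Uniqueness of the tree-optimal polynomial.} You fix ``one optimal choice'' of angles, but the theorem concerns whichever maximiser $(\bm{\beta}_{p,2}^{\mathrm{tree}}, \bm{\gamma}_{p,2}^{\mathrm{tree}})$ is used. So you must show that every maximiser has the same $|L|$. The paper does this directly. The tree value is $1 - \sum_m |c_{2p+1-2m}|^2$, so optimality forces $\sum_m |c_{2p+1-2m}|^2 = 1/N$, while $|\sum_m c_{2p+1-2m}| = 1$. Equality in Cauchy--Schwarz then forces all $N$ coefficients to equal $e^{i\phi}/N$.

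With these repairs, your aliasing route is a legitimate alternative to the paper's. The paper instead bounds $\sin^2((p+1)\theta_j) \leq 1$ and evaluates the resulting $\csc^2$ sums in closed form. Your route would give exact values and a uniform explanation of the equality case. As written, though, its key sign and size claims are false.
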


\begin{proof}
    For $d = 1$, the same reasoning as in the proof of \autoref{theorem:POneFixedAngle} applies:
    Every $1$-regular graph is necessarily bipartite and has girth $= \infty > 2 (p + 1)$, valid for all $p \in \N$.
    Hence, there is again nothing to show for $d = 1$.

    The case $d = 2$ can be handled using Marwaha's recent work.
    Since $T_{2}^{p}$ with its degree-$1$ vertices connected has girth $= 2 (p + 1)$ for all $p \in \N$, \cite[Theorem 1]{Marwaha2026TheQaoaOnTheRingOfDisagrees} and \cite[Fact 13]{Marwaha2026TheQaoaOnTheRingOfDisagrees} are applicable for determining the optimal expectation value of its centre edge.
    Due to locality, this value is the same whether the endpoints of $T_{2}^{p}$ are connected or not, thus yielding
    \begin{align*}
        f_{p, 2}^{\mathrm{tree}, *} = \frac{2 p + 1}{2 (p + 1)} = 1 - \frac{1}{2 (p + 1)}.
    \end{align*}
    Let $(\bm{\beta}_{p, 2}^{\mathrm{tree}}, \bm{\gamma}_{p, 2}^{\mathrm{tree}})$ be any pair of tree-optimal parameters and let
    \begin{align*}
        L(z) = \sum_{m = 0}^{2 p + 1} c_{2 p + 1 - 2 m} z^{2 p + 1 - 2m},\ \left\lvert\sum_{m = 0}^{2 p + 1} c_{2 p + 1 - 2 m}\right\rvert^{2} = 1,
    \end{align*}
    be the associated Laurent polynomial.
    Expressing the QAOA approximation ratio on $T_{2}^{p}$ (with its degree-$1$ vertices connected) via $L$ yields
    \begin{align*}
        &f_{p, 2}^{\mathrm{tree}, *} = \frac{2 p + 1}{2 (p + 1)} = 1 - \sum_{m = 0}^{2 p + 1} \abs{c_{2 p + 1 - 2m}}^{2} \\
        \implies &\sum_{m = 0}^{2 p + 1} \abs{c_{2 p + 1 - 2 m}}^{2} = \frac{1}{2 (p + 1)}.
    \end{align*}
    In summary, the coefficients of $L$ saturate the Cauchy--Schwarz inequality
    \begin{align*}
        1 = \left\lvert\sum_{m = 0}^{2 p + 1} c_{2 p + 1 - 2m}\right\rvert^{2} \leq 2 (p + 1) \sum_{m = 0}^{2 p + 1} \abs{c_{2 p + 1 - 2m}}^{2} = 1
    \end{align*}
    and must therefore be all equal to some $\alpha \in \C$ with $\abs{\alpha} = 1 / (2 (p + 1))$, i.e., $\alpha = e^{i \phi} / (2 (p + 1))$ for a $\phi \in \R$.
    This leaves $L$ as a finite geometric series which simplifies even further under the substitution $z = e^{i \theta / 2}$ to
    \begin{align}
        L(e^{i \theta / 2}) &= \frac{e^{i \phi}}{2 (p + 1)} \sum_{m = 0}^{2 p + 1} \left(e^{i \theta / 2}\right)^{2 p + 1 - 2m} \nonumber \\
        &= \frac{e^{i \phi}}{2 (p + 1)} \frac{e^{i \theta (p + 1)} - e^{-i \theta (p + 1)}}{e^{i \theta / 2} - e^{-i \theta / 2}} \nonumber \\
        &= \frac{e^{i \phi}}{2 (p + 1)} \frac{\sin(\theta (p + 1))}{\sin(\theta / 2)}. \label{equation:SimplifiedLaurentPolynomial}
    \end{align}

    Note that all $2$-regular graphs are disjoint unions of cycles, so that any maximum cut of the union is given by the maximum cuts of its cycle components.
    Therefore, it suffices to restrict to single cycles.
    Consider first a cycle $G$ of even length $2 k$, $k \geq 2$, and note that $G$ is necessarily bipartite and has thus a maximum cut of $\abs{E} = 2 k$.
    If $G$ has girth $\geq 2 (p + 1)$, i.e., if $k > p$, the situation is exactly as for the glued $T_{2}^{p}$, yielding the exact same approximation ratio.
    Following \cite[Claim 3]{Marwaha2026TheQaoaOnTheRingOfDisagrees} and the amplitude representation underlying \cite[Corollary 10]{Marwaha2026TheQaoaOnTheRingOfDisagrees}, if $k \leq p$, it still holds that
    \begin{align*}
        C(\bm{\beta}_{p, 2}^{\mathrm{tree}}, \bm{\gamma}_{p, 2}^{\mathrm{tree}}) = \frac{F(\bm{\beta}_{p, 2}^{\mathrm{tree}}, \bm{\gamma}_{p, 2}^{\mathrm{tree}})}{2 k} = 1 - \frac{1}{k} \sum_{j = 1}^{k} \abs{L(e^{i \theta_{j} / 2})}^{2}
    \end{align*}
    with $\theta_{j} = \pi (2 j - 1) / (2 k)$.
    Using the simplified form~\eqref{equation:SimplifiedLaurentPolynomial}, the bound $\sin^{2}(\theta_{j} (p + 1)) \leq 1$ for all $j \in [k]$, and substituting $(n, k)$ in \cite[Proposition 3 (IV)]{Allouche2022HumanAndAutomatedApproachesForFiniteTrigonometricSums} by $(2 k, j - 1)$ yields the estimate
    \begin{align*}
        C(\bm{\beta}_{p, 2}^{\mathrm{tree}}, \bm{\gamma}_{p, 2}^{\mathrm{tree}}) &\geq 1 - \frac{1}{4 (p + 1)^{2} k} \sum_{j = 1}^{k} \csc^{2}\left(\tfrac{(2 j - 1) \pi}{4 k}\right) \\
        &= 1 - \frac{2 k^{2}}{4 (p + 1)^{2} k} = 1 - \frac{k}{2 (p + 1)^{2}} \\
        &> 1 - \frac{1}{2 (p + 1)} = f_{p, 2}^{\mathrm{tree}, *}.
    \end{align*}
    Now, consider a cycle $G$ of odd length $2 k + 1$, $k \geq 1$.
    The analysis is mostly analogous to the even case, but with the difference that in this case the maximum cut of $G$ is $\abs{E} - 1 = 2 k$.
    For $G$ of girth $\geq 2 (p + 1)$, i.e., with $k > p$, it is
    \begin{align*}
        C(\bm{\beta}_{p, 2}^{\mathrm{tree}}, \bm{\gamma}_{p, 2}^{\mathrm{tree}}) > \frac{F(\bm{\beta}_{p, 2}^{\mathrm{tree}}, \bm{\gamma}_{p, 2}^{\mathrm{tree}})}{2 k + 1} = f_{p, 2}^{\mathrm{tree}, *}.
    \end{align*}
    If instead $k \leq p$, \cite[Corollary 16]{Marwaha2026TheQaoaOnTheRingOfDisagrees} and the amplitude representation underlying~\cite[Corollary 17]{Marwaha2026TheQaoaOnTheRingOfDisagrees} yield
    \begin{align*}
        C(\bm{\beta}_{p, 2}^{\mathrm{tree}}, \bm{\gamma}_{p, 2}^{\mathrm{tree}}) = \frac{F(\bm{\beta}_{p, 2}^{\mathrm{tree}}, \bm{\gamma}_{p, 2}^{\mathrm{tree}})}{2 k} = 1 - \frac{1}{k} \sum_{j = 1}^{k} \abs{L(e^{i \vartheta_{j} / 2})}^{2}
    \end{align*}
    with angles $\vartheta_{j} = 2 \pi j / (2 k + 1)$.
    The combination of~\eqref{equation:SimplifiedLaurentPolynomial} with the fact that $\sin^{2}(\vartheta_{j} (p + 1)) \leq 1$ and the substitution of $(n, k)$ in \cite[Proposition 3 (I)]{Allouche2022HumanAndAutomatedApproachesForFiniteTrigonometricSums} by $(2 k + 1, j - 1)$ yields
    \begin{align*}
        C(\bm{\beta}_{p, 2}^{\mathrm{tree}}, \bm{\gamma}_{p, 2}^{\mathrm{tree}}) &\geq 1 - \frac{1}{4 (p + 1)^{2} k} \sum_{j = 1}^{k} \csc^{2}\left(\tfrac{j \pi}{2 k + 1}\right) \\
        &= 1 - \frac{2 k (k + 1)}{12 (p + 1)^{2} k} = 1 - \frac{k + 1}{6 (p + 1)^{2}} \\
        &\geq 1 - \frac{1}{6 (p + 1)} > 1 - \frac{1}{2 (p + 1)} = f_{p, 2}^{\mathrm{tree}, *}.
    \end{align*}
    This completes the analysis for connected $2$-regular graphs.
    A general $2$-regular graph, i.e., a disjoint union of connected $2$-regular graphs, is bipartite and has girth $\geq 2 (p + 1)$ if and only if this is true for all its connected components, thus directly inheriting the strict inequality and equality case.
\end{proof}

\section{\label{section:ConcreteCounterexample}Concrete counterexample}

Both regimes, in which the fixed-angle conjecture is true, benefit from relatively simple, closed-form expressions for the edge-wise expectation values $f_{e}(\bm{\beta}, \bm{\gamma})$, developed by Wang~\textit{et~al.}~\cite{Wang2018QuantumApproximateOptimizationAlgorithmForMaxcutAFermionicView}.
Later work by Marwaha~\cite{Marwaha2021LocalClassicalMaxCutAlgorithmOutperformsQaoaOnHighGirthRegularGraphs} provides a closed-form expression for the expectation values for $p = 2$ QAOA on all $d$-regular graphs with girth $\geq 6 = 2 (2 + 1)$.
In particular, it can be used to calculate $f_{2, d}^{\mathrm{tree}}(\bm{\beta}, \bm{\gamma})$.\footnote{
    To be precise, the formula applies to an embedding of $T_{d}^{p}$ into a $d$-regular graph by suitably connecting $T_{d}^{p}$'s leaves.
    The QAOA's expectation value on the centre edge is not affected by the additional edges due to the QAOA's locality.
}
However, it will not directly help with evaluating expectation values of a potential counterexample to the fixed-angle conjecture, since any counterexample for $p = 2$ necessarily has girth $< 6 = 2 (2 + 1)$.
I will not state the lengthy formula here and just reference Marwaha's main theorem in~\cite{Marwaha2021LocalClassicalMaxCutAlgorithmOutperformsQaoaOnHighGirthRegularGraphs}, whenever it is used.
For the counterexample, I utilise a Dicke state simulator, detailed in \autoref{section:DickeStateSimulator}.

Before I present the concrete counterexample, let me first provide some intuition to foster an educated guess for graphs challenging the fixed-angle conjecture.
Again, the situation is best understood in terms of the per-edge differences $\delta_{e}$, $e \in E$, and the number of uncut edges $\nu$.
Recall that the fixed-angle conjecture fails for a given $d$-regular graph $G$ and depth $p \in \N$ if and only if
\begin{align*}
    \sum_{e \scriptin E} \delta_{e} > f_{p, d}^{\mathrm{tree}, *} \nu.
\end{align*}
Short odd cycles have the potential to worsen $f_{e}(\bm{\beta}_{p, d}^{\mathrm{tree}}, \bm{\gamma}_{p, d}^{\mathrm{tree}})$, but simultaneously increase $\nu$ so that achieving the above strict inequality becomes more challenging.
In the extreme case, every odd cycle contributes $+ 1$ to $\nu$, regardless of its actual length.
However, the shorter the odd cycle is, the fewer single-edge expectation values are affected by it.
Therefore, at small $p$, where only tiny odd cycles are visible to the QAOA, odd cycles will most likely not be responsible for the failure of the fixed-angle conjecture.
Note that this intuition stems directly from the proof of \autoref{theorem:POneFixedAngle}, where the damage dealt to the single-edge expectation values by triangles, i.e., the smallest odd cycles, is compensated by the increase in $\nu$.

In comparison, short even cycles also have the potential to worsen single-edge expectation values, while not affecting $\nu$.
If indeed $\nu = 0$---which is the case if and only if $G$ is bipartite, i.e., has no odd cycles---the (necessary and) sufficient condition for the failure of the fixed-angle conjecture becomes
\begin{align*}
    \sum_{e \scriptin E} \delta_{e} > 0 \iff f_{p, d}^{\mathrm{tree}, *} > \tfrac{1}{\abs{E}} \sum_{e \scriptin E} f_{e}(\bm{\beta}_{p, d}^{\mathrm{tree}}, \bm{\gamma}_{p, d}^{\mathrm{tree}}).
\end{align*}
That is, for bipartite graphs, single-edge expectation values $f_{e}(\bm{\beta}_{p, d}^{\mathrm{tree}}, \bm{\gamma}_{p, d}^{\mathrm{tree}})$ must, on average, be worse than the tree reference value $f_{p, d}^{\mathrm{tree}, *}$.
The maximally distorted case is when an edge is part of many small even cycles, visible to the QAOA.
This also explains why even cycles do not make the fixed-angle conjecture false on $2$-regular graphs, where an edge is part of at most one cycle.
The smallest even cycles are of length $4$, invisible to the QAOA at $p = 1$, but already visible at $p = 2$.
For a given $d$, the unique, connected, bipartite, $d$-regular graph which maximises the number of four-cycles per edge is the complete bipartite graph $K_{d, d}$, depicted in \autoref{figure:CompleteBipartiteGraph}.
Here, every edge lies in $(d - 1)^{2}$ four-cycles.
As already implied by \autoref{proposition:ThreeRegularFixedAngle} and \autoref{theorem:SmallDFixedAngle}, small values for $d$ do not destructively disturb the local structure enough.
In fact, preliminary numerical studies suggest that $K_{d, d}$ with $d < 9$ is not a valid counterexample to the fixed-angle conjecture.
At $d = 9$, the local structure seems complex enough for the first time to significantly affect $f_{e}(\bm{\beta}_{2, 9}^{\mathrm{tree}}, \bm{\gamma}_{2, 9}^{\mathrm{tree}})$.
However, it is a priori not clear whether this effect always has to be destructive.
In the case of $K_{9, 9}$ it was found to be indeed destructive, but I explicitly do not want to conjecture that this is also the case for every $d > 9$.
Another important aspect of $K_{d, d}$ is that, already at $p = 2$, QAOA sees the entire graph, since any two vertices $i, j \in K_{d, d}$ have a distance to each other of at most $2$; if $i$ and $j$ are in opposite parts, their distance is $1$, if they are in the same part, their distance is $2$.
In particular, this means that all subgraphs $G_{i j}^{p}$, $p > 1$, are the same and isomorphic to $K_{d, d}$ itself.
Therefore, it holds that
\begin{align}\label{equation:BipartiteExpVal}
    F(\bm{\beta}, \bm{\gamma}) = \sum_{e \scriptin E} f_{e}(\bm{\beta}, \bm{\gamma}) = \abs{E} f_{p, d}^{K}(\bm{\beta}, \bm{\gamma}),
\end{align}
where $f_{p, d}^{K}$ is the single-edge expectation value of an arbitrary edge in $K_{d, d}$.
Furthermore, the MaxCut objective function on $K_{d, d}$ has two $d$-element permutation symmetries, one per part, as well as the binary symmetry of exchanging all $d$ bit (values) belonging to one part with the ones belonging to the other part.
More abstractly, $K_{d, d}$'s graph automorphism group is given by $\Sym(d) \times \Sym(d) \rtimes \Z_{2}$.
Since the objective function faithfully encodes the entire graph structure, it inherits this symmetry.
The mixer Hamiltonian carries a strictly richer $\Sym(2 d)$ symmetry.
Thus, the QAOA circuit at least retains the weaker symmetry of the objective function/Hamiltonian.
Writing out the initial state in (the symmetrisation of) per-part symmetrised states, i.e., (symmetrised) products of two $d$-qubit Dicke states, makes it possible to fully exploit this symmetry property.
Thereby, instead of simulating a $2^{2 d}$-dimensional system, the application of the QAOA circuit can be perfectly simulated on a symmetrised system of dimension $(d + 1) (d + 2) / 2$.
Concrete details can be found in \autoref{section:DickeStateSimulator}.

\begin{figure*}[t]
    \centering
    \resizebox{0.92\linewidth}{!}{%
        \begin{tikzpicture}[baseline]
            \begin{scope}[xshift=0cm]
                \foreach \i in {1} {
                    \pgfmathsetmacro{\y}{(\i - 1) * 0.4}
                    \coordinate (L\i) at (-0.4,\y);
                    \coordinate (R\i) at (0.4,\y);
                }
                \foreach \l in {1} {
                    \foreach \r in {1} {
                        \draw[black!45] (L\l) -- (R\r);
                    }
                }
                \foreach \i in {1} {
                    \fill[fill=blue!70!black] (L\i) circle (2.2pt);
                    \fill[fill=red!70!black] (R\i) circle (2.2pt);
                }
                \node at (0,-0.5) {\small $K_{1, 1}$};
            \end{scope}

            \begin{scope}[xshift=2.3cm]
                \foreach \i in {1,2} {
                    \pgfmathsetmacro{\y}{(\i - 1.5) * 0.4}
                    \coordinate (L\i) at (-0.4,\y);
                    \coordinate (R\i) at (0.4,\y);
                }
                \foreach \l in {1,2} {
                    \foreach \r in {1,2} {
                        \draw[black!45] (L\l) -- (R\r);
                    }
                }
                \foreach \i in {1,2} {
                    \fill[fill=blue!70!black] (L\i) circle (2.2pt);
                    \fill[fill=red!70!black] (R\i) circle (2.2pt);
                }
                \node at (0,-0.7) {\small $K_{2, 2}$};
            \end{scope}

            \begin{scope}[xshift=4.6cm]
                \foreach \i in {1,2,3} {
                    \pgfmathsetmacro{\y}{(\i - 2) * 0.4}
                    \coordinate (L\i) at (-0.4,\y);
                    \coordinate (R\i) at (0.4,\y);
                }
                \foreach \l in {1,2,3} {
                    \foreach \r in {1,2,3} {
                        \draw[black!45] (L\l) -- (R\r);
                    }
                }
                \foreach \i in {1,2,3} {
                    \fill[fill=blue!70!black] (L\i) circle (2.2pt);
                    \fill[fill=red!70!black] (R\i) circle (2.2pt);
                }
                \node at (0,-0.9) {\small $K_{3, 3}$};
            \end{scope}

            \node at (7.1,0) {\Large $\cdots$};

            \begin{scope}[xshift=9.6cm]
                \foreach \i in {1,...,9} {
                    \pgfmathsetmacro{\y}{(\i - 5) * 0.4}
                    \coordinate (L\i) at (-0.4,\y);
                    \coordinate (R\i) at (0.4,\y);
                }
                \foreach \l in {1,...,9} {
                    \foreach \r in {1,...,9} {
                        \draw[black!30, thin] (L\l) -- (R\r);
                    }
                }
                \foreach \i in {1,...,9} {
                    \fill[fill=blue!70!black] (L\i) circle (2.4pt);
                    \fill[fill=red!70!black] (R\i) circle (2.4pt);
                }
                \node at (0,-2.1) {\small $K_{9, 9}$};
            \end{scope}
        \end{tikzpicture}%
    }
    \caption{\label{figure:CompleteBipartiteGraph}
        The complete bipartite graphs $K_{d, d}$ for $d = 1, 2, 3, \dots, 9$; every vertex in one part is joined to every vertex in the other part, so each edge lies in $(d - 1)^{2}$ four-cycles.
        $K_{9, 9}$ serves as a counterexample to the fixed-angle conjecture for $p = 2$ on $9$-regular graphs.
    }
\end{figure*}

Let me now state the counterexample as a concrete theorem.
Note that directly proving that the tree-optimal angles perform worse on $K_{9, 9}$ than on $T_{9}^{2}$ would require precise knowledge of the true global optimisers of $f_{2, 9}^{\mathrm{tree}}$.
However, even using multi-start routines~\cite{Shaydulin2019MultistartMethodsForQuantumApproximateOptimization} as Wurtz and Lykov did~\cite{Wurtz2021FixedAngleConjecturesForTheQuantumApproximateOptimizationAlgorithmOnRegularMaxcutGraphs} cannot guarantee that a global optimum has been found.
While such techniques are perfectly fine for supporting numerical evidence, a strictly proven counterexample requires a (more) rigorous strategy.
Instead of trying to search for and rigorously certify that a given parameter point $(\bm{\beta}, \bm{\gamma})$ constitutes a global optimiser, I decided to test for a stronger but easier-to-verify statement, which then implies the original claim.
The statement consists of two parts.
First, a (presumably merely) local optimiser is used to obtain a threshold $b \in \R$ so that, by the very definition of the global optimum, $b \leq f_{2, 9}^{\mathrm{tree}, *}$.
Second, it is calculated that for all $(\bm{\beta}, \bm{\gamma}) \in \R^{4}$ with $f_{2, 9}^{\mathrm{tree}}(\bm{\beta}, \bm{\gamma}) \geq b$, $f_{2, 9}^{K}(\bm{\beta}, \bm{\gamma}) < f_{2, 9}^{\mathrm{tree}}(\bm{\beta}, \bm{\gamma})$.
Since the latter family of parameter values $(\bm{\beta}, \bm{\gamma})$ in particular contains any global optimiser, this is a strictly stronger statement.

\begin{theorem}\label{theorem:Counterexample}
    Let $G = K_{9, 9}$, $b \coloneqq 0.6391275408952286$, and $c \coloneqq 0.0029209162022571854$.
    Then,
    \begin{itemize}
        \item[i)] $b \leq f_{2, 9}^{\mathrm{tree}, *}$ and
        \item[ii)] for every $(\bm{\beta}, \bm{\gamma}) \in \R^{4}$ with $f_{2, 9}^{\mathrm{tree}}(\bm{\beta}, \bm{\gamma}) \geq b$ it is $f_{2, 9}^{K}(\bm{\beta}, \bm{\gamma}) - f_{2, 9}^{\mathrm{tree}}(\bm{\beta}, \bm{\gamma}) \leq -c$.
    \end{itemize}
\end{theorem}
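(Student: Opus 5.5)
Part i) only needs a witness. I would run a local optimiser (e.g.\ BFGS from several starts) on $f_{2, 9}^{\mathrm{tree}}$, computed with Marwaha's closed-form $p = 2$ expression for girth-$\geq 6$ graphs. This yields a point $(\bm{\beta}_{0}, \bm{\gamma}_{0})$. I would then evaluate $f_{2, 9}^{\mathrm{tree}}(\bm{\beta}_{0}, \bm{\gamma}_{0})$ rigorously with interval arithmetic and check that the certified lower endpoint is at least $b$. Since $f_{2, 9}^{\mathrm{tree}, *}$ is a maximum, $b \leq f_{2, 9}^{\mathrm{tree}}(\bm{\beta}_{0}, \bm{\gamma}_{0}) \leq f_{2, 9}^{\mathrm{tree}, *}$.

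For ii), I would first reduce to a compact domain using the symmetries of $f_{2,9}^{\mathrm{tree}}$ and $f_{2,9}^{K}$. Both are trigonometric polynomials in the parameters. Each $\beta_{k}$ has period $\pi/2$, since $e^{-i\pi B/2} = (-i)^{n}\texttt{X}^{\otimes n}$ commutes with $H_{f}$ up to phase. Each $\gamma_{k}$ has period $2\pi$ (even $\pi$ suffices after combining with the bit-flip symmetry, because $H_{f}$ has integer spectrum). The simultaneous sign flip $(\bm{\beta}, \bm{\gamma}) \mapsto (-\bm{\beta}, -\bm{\gamma})$ (complex conjugation) also leaves both unchanged. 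Together these reduce the problem to a box $D \subset \R^{4}$.

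On $D$ I would evaluate $f^{K}_{2,9}$ exactly with the Dicke-state simulator on the $55$-dimensional symmetric subspace, and $f^{\mathrm{tree}}_{2,9}$ with Marwaha's formula. I would then run a branch-and-bound over boxes in $D$. For each box $Q$, interval arithmetic gives enclosures of $f^{\mathrm{tree}}_{2,9}(Q)$ and of $(f^{K}_{2,9} - f^{\mathrm{tree}}_{2,9})(Q)$. A box is discarded if the upper bound of $f^{\mathrm{tree}}_{2,9}$ on $Q$ is $< b$. It is certified if the upper bound of the difference is $\leq -c$. Otherwise it is bisected. Termination means every point of $D$ with $f^{\mathrm{tree}}_{2,9} \geq b$ lies in a certified box, which proves ii).

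To keep enclosures tight, I would use Lipschitz or mean-value bounds rather than naive intervals. The gradients can be bounded explicitly: each is a finite sum of terms whose derivatives are controlled by commutator norms like $\lVert [H_{f}, \cdot] \rVert \leq \lvert E \rvert$ and $\lVert B \rVert \leq n$ on the causal cone. These give a uniform constant $L$ with $\lvert g(x) - g(y) \rvert \leq L \lVert x - y \rVert_{\infty}$, so midpoint evaluations with rigorous rounding suffice.

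The main obstacle is making ii) computationally feasible and rigorous near the superlevel set $\{f^{\mathrm{tree}}_{2,9} \geq b\}$. There the branch-and-bound cannot discard boxes, so it must resolve the difference to accuracy finer than its value. The value of $c$ (roughly the observed worst-case gap) suggests the margin is small, and the Lipschitz constants for $d = 9$ are large, so a fine subdivision will be needed. I would first check numerically that the superlevel set is a small neighbourhood of the (symmetry-orbit of the) tree optimiser and that the gap there is comfortably below $-c$. I would then concentrate the refinement adaptively there, using second-order (Hessian-based) enclosures if first-order bounds are too weak.
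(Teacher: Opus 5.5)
Your plan matches the paper's proof. For i), you evaluate Marwaha's closed form at a locally optimised witness with rigorous interval enclosures. For ii), you reduce to a box by symmetry, simulate $f^{K}_{2,9}$ exactly on the $55$-dimensional Dicke subspace, and run an interval branch-and-bound: discard boxes with $\sup f^{\mathrm{tree}}_{2,9} < b$, certify boxes whose mean-value enclosure of the difference lies below $-c$, and bisect the rest.

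The one real difference is how the mean-value enclosure is obtained. The paper computes box-local gradient enclosures by forward-mode automatic differentiation over Arb balls, through both the closed form and the simulator, rather than using a uniform commutator-norm Lipschitz constant. A global $L$ of order $10^{2}$ would force far finer subdivision near the flat tree maximum, where $f_{2,9}^{\mathrm{tree},*} - b$ is at most about $10^{-7}$. Your Hessian-based fallback, or the paper's local gradients, is what actually makes the covering tractable.
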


\begin{proof}
    The proof itself is mostly computational.
    Here, I will only report on the high-level proof strategy; implementation details can be found in \autoref{section:ArbImplementationDetails}.
    The first part is to rigorously establish $b \leq f_{2, 9}^{\mathrm{tree}, *}$ under finite machine precision.
    Since $f_{2, 9}^{\mathrm{tree}, *}$ is, by definition, the optimal value for the tree expectation value, exactly evaluating $f_{2, 9}^{\mathrm{tree}}(\bm{\beta}', \bm{\gamma}')$ for any concrete $(\bm{\beta}', \bm{\gamma}') \in \R^{4}$ yields a rigorous lower bound.
    This evaluation is done via Marwaha's closed-form expression~\cite{Marwaha2021LocalClassicalMaxCutAlgorithmOutperformsQaoaOnHighGirthRegularGraphs} for $p = 2$ QAOA on $d$-regular graphs with girth $\geq 6$.
    However, ordinary floating-point number representations for determining $f_{2, 9}^{\mathrm{tree}}(\bm{\beta}', \bm{\gamma}')$ introduce potential sources for false-positive inequalities.
    Instead, I employ arbitrary-precision midpoint-radius interval arithmetic (Arb)~\cite{Johansson2017ArbEfficientArbitraryPrecisionMidpointRadiusIntervalArithmetic} to provably certify such inequalities.
    The Arb framework does not compute $f_{2, 9}^{\mathrm{tree}}(\bm{\beta}', \bm{\gamma}')$ exactly, but outputs an interval $[l, u]$ so that, provably, $f_{2, 9}^{\mathrm{tree}}(\bm{\beta}', \bm{\gamma}') \in [l, u]$.
    Then, the claim $b \leq f_{2, 9}^{\mathrm{tree}, *}$ reduces to verifying $b \leq l$.

    The second part of the proof is more involved, both conceptually and computationally.
    First, let me apply the usual symmetry considerations for the QAOA applied to MaxCut problems in order to reduce the unbounded parameter space to a bounded box.
    For the purely-quadratic MaxCut objective Hamiltonian, each $\beta$ parameter may be individually restricted to $[-\pi / 4, \pi / 4)$~\cite{Zhou2020QuantumApproximateOptimizationAlgorithmPerformanceMechanismAndImplementationOnNearTermDevices}.
    Additionally, since the objective Hamiltonian has integer spectrum, each $\gamma$ parameter may be individually restricted to $[-\pi, \pi)$.
    Furthermore, there is the time-reversal symmetry $f_{e}(\bm{\beta}, \bm{\gamma}) = f_{e}(-\bm{\beta}, -\bm{\gamma})$, which can be absorbed into any of the parameters; here I chose $\gamma_{1}$.
    These three symmetries condense the essential parameter space to $[-\pi / 4, \pi / 4) \times [0, \pi) \times [-\pi / 4, \pi / 4) \times [-\pi, \pi)$.
    The strategy is to cover the bounded parameter space with finitely many boxes
    \begin{align*}
        B_{i j k \ell} \coloneqq [l_{1}^{(i)}, u_{1}^{(i)}] \times [l_{2}^{(j)}, u_{2}^{(j)}] \times [l_{3}^{(k)}, u_{3}^{(k)}] \times [l_{4}^{(\ell)}, u_{4}^{(\ell)}],
    \end{align*}
    and to evaluate $f_{2, 9}^{\mathrm{tree}}$ for each of these boxes in Arb.
    The per-box result is an interval $[\tau_{l}, \tau_{u}] \coloneqq [\tau_{l}, \tau_{u}](i, j, k, \ell)$, provably containing all the values in $f_{2, 9}^{\mathrm{tree}}(B_{i j k \ell})$.
    If $\tau_{u}$ sits strictly below $b$, then no $(\bm{\beta}, \bm{\gamma}) \in B_{i j k \ell}$ can fulfil $f_{2, 9}^{\mathrm{tree}}(\bm{\beta}, \bm{\gamma}) \geq b$, disqualifying the entire box as being relevant for ii).
    Otherwise there might be parameter values $(\bm{\beta}, \bm{\gamma}) \in B_{i j k \ell}$ with $f_{2, 9}^{\mathrm{tree}}(\bm{\beta}, \bm{\gamma}) \geq b$ so that the second condition in ii) has to be checked as well.
    This is again done for the entire box at once, using Arb.
    The interval $[\tau_{l}, \tau_{u}]$, provably containing $f_{2, 9}^{\mathrm{tree}}(B_{i j k \ell})$, has already been computed for the first condition, so it remains to compute a second interval $[\kappa_{l}, \kappa_{u}] \coloneqq [\kappa_{l}, \kappa_{u}](i, j, k, \ell)$, provably containing $f_{2, 9}^{K}(B_{i j k \ell})$.
    This is done by combining Arb with a Dicke state simulator, detailed in \autoref{section:DickeStateSimulator}.
    Naively subtracting both intervals would already produce the interval $[\kappa_{l} - \tau_{u}, \kappa_{u} - \tau_{l}]$, provably containing $(f_{2, 9}^{K} - f_{2, 9}^{\mathrm{tree}})(B_{i j k \ell})$.
    Since this discards the strong correlation between both functions, the implementation instead encloses the difference directly via the mean-value form described in \autoref{section:ArbImplementationDetails}, yielding a tighter interval $[\delta_{l}, \delta_{u}] \supseteq (f_{2, 9}^{K} - f_{2, 9}^{\mathrm{tree}})(B_{i j k \ell})$.
    Then, if $\delta_{u} \leq -c$, ii) is established for the entire box.
    Note that a box $B_{i j k \ell}$ failing the first test might still contain many values $(\bm{\beta}, \bm{\gamma})$ with $f_{2, 9}^{\mathrm{tree}}(\bm{\beta}, \bm{\gamma}) < b$.
    Yet, the second test covers these values as well.
    Additionally, if the difference interval's upper bound $\delta_{u}$ is too loose, it might lie strictly above $-c$, even though all true values $(f_{2, 9}^{K} - f_{2, 9}^{\mathrm{tree}})(B_{i j k \ell})$ might lie below $-c$.
    Accordingly, a failure of the second test on a specific box does not automatically imply that the statement of \autoref{theorem:Counterexample} is wrong.
    Instead, it simply means that the box $B_{i j k \ell}$ was taken too large for any Arb-based test to be decisive.
    $B_{i j k \ell}$ is now recursively split into smaller boxes until ii) can be rigorously certified for all subboxes, covering $B_{i j k \ell}$.
    This follows the same logic as the initial coverage of the entire parameter space by boxes $B_{i j k \ell}$.
\end{proof}

\begin{corollary}[Counterexample for $p = 2$, $d = 9$]\label{corollary:Counterexample}
    For $G = K_{9, 9}$, it is $C(\bm{\beta}_{2, 9}^{\mathrm{tree}}, \bm{\gamma}_{2, 9}^{\mathrm{tree}}) < f_{2, 9}^{\mathrm{tree}, *}$.
\end{corollary}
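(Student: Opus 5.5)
The corollary should follow directly from \autoref{theorem:Counterexample} together with the structural facts about $K_{9,9}$ recorded in \eqref{equation:BipartiteExpVal}. The plan is to first express $C$ on $K_{9,9}$ through the single-edge function $f_{2,9}^{K}$. Then I apply part ii) of the theorem at the tree-optimal point, which is admissible because of part i).

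First, $K_{9,9}$ is bipartite, so $F_{\max} = \abs{E}$. By \eqref{equation:BipartiteExpVal}, every edge sees the whole graph at $p = 2$, so $F(\bm{\beta}, \bm{\gamma}) = \abs{E} f_{2,9}^{K}(\bm{\beta}, \bm{\gamma})$. Together these give
\begin{align*}
    C(\bm{\beta}, \bm{\gamma}) = f_{2,9}^{K}(\bm{\beta}, \bm{\gamma}) \quad \text{for all } (\bm{\beta}, \bm{\gamma}) \in \R^{4}.
\end{align*}

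Second, set $(\bm{\beta}, \bm{\gamma}) = (\bm{\beta}_{2,9}^{\mathrm{tree}}, \bm{\gamma}_{2,9}^{\mathrm{tree}})$. By definition, $f_{2,9}^{\mathrm{tree}}$ evaluated at this point equals $f_{2,9}^{\mathrm{tree},*}$, and part i) of \autoref{theorem:Counterexample} gives $f_{2,9}^{\mathrm{tree},*} \geq b$. Hence this point lies in the set quantified over in part ii), and part ii) yields
\begin{align*}
    C(\bm{\beta}_{2,9}^{\mathrm{tree}}, \bm{\gamma}_{2,9}^{\mathrm{tree}}) = f_{2,9}^{K}(\bm{\beta}_{2,9}^{\mathrm{tree}}, \bm{\gamma}_{2,9}^{\mathrm{tree}}) \leq f_{2,9}^{\mathrm{tree},*} - c < f_{2,9}^{\mathrm{tree},*},
\end{align*}
where the last step uses $c > 0$.

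The only point needing care is that the argument must not depend on which global maximiser is chosen as $(\bm{\beta}_{2,9}^{\mathrm{tree}}, \bm{\gamma}_{2,9}^{\mathrm{tree}})$. Part ii) covers every point whose tree value is at least $b$, so the conclusion holds for any maximiser. This includes maximisers outside the reduced box used in the computational proof: each such maximiser maps into the box under the $\beta$/$\gamma$ periodicities and the time-reversal symmetry, and these symmetries preserve both $f_{2,9}^{\mathrm{tree}}$ and $f_{2,9}^{K}$. There is therefore no real obstacle. All the work is in \autoref{theorem:Counterexample}, and the corollary is a two-line deduction.
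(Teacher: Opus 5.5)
Your proposal is correct and follows essentially the same route as the paper: part i) of \autoref{theorem:Counterexample} places any tree maximiser in the set covered by part ii), and \eqref{equation:BipartiteExpVal} together with $F_{\max} = \abs{E}$ gives $C = f_{2,9}^{K} \leq f_{2,9}^{\mathrm{tree},*} - c < f_{2,9}^{\mathrm{tree},*}$ for every choice of maximiser. Your extra remark about the symmetry reduction is harmless but unnecessary, since part ii) is already stated for all of $\R^{4}$.
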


\begin{proof}
    Let $(\bm{\beta}_{2, 9}^{\mathrm{tree}}, \bm{\gamma}_{2, 9}^{\mathrm{tree}}) \in \R^{4}$ be any maximiser of $f_{2, 9}^{\mathrm{tree}}$, i.e., it holds by \autoref{theorem:Counterexample} i) that
    \begin{align*}
        f_{2, 9}^{\mathrm{tree}}(\bm{\beta}_{2, 9}^{\mathrm{tree}}, \bm{\gamma}_{2, 9}^{\mathrm{tree}}) = f_{2, 9}^{\mathrm{tree}, *} \geq b.
    \end{align*}
    Then, by \autoref{theorem:Counterexample} ii) and~\eqref{equation:BipartiteExpVal}, it is
    \begin{align*}
        C(\bm{\beta}_{2, 9}^{\mathrm{tree}}, \bm{\gamma}_{2, 9}^{\mathrm{tree}}) &= \frac{1}{\abs{E}} \sum_{e \scriptin E} f_{e}(\bm{\beta}_{2, 9}^{\mathrm{tree}}, \bm{\gamma}_{2, 9}^{\mathrm{tree}}) \\
        &= f_{2, 9}^{K}(\bm{\beta}_{2, 9}^{\mathrm{tree}}, \bm{\gamma}_{2, 9}^{\mathrm{tree}}) \leq f_{2, 9}^{\mathrm{tree}, *} \hspace*{-2pt} - c < f_{2, 9}^{\mathrm{tree}, *}.
    \end{align*}
    This holds for all maximisers $(\bm{\beta}_{2, 9}^{\mathrm{tree}}, \bm{\gamma}_{2, 9}^{\mathrm{tree}})$ of $f_{2, 9}^{\mathrm{tree}}$.
\end{proof}

This concludes the rigorous refutation of the fixed-angle conjecture.
However, note that $K_{9, 9}$ is not (necessarily) a counterexample to \autoref{conjecture:LargeLoop}, which compares the performance of the QAOA with per-graph optimised parameters.
For the sake of completeness, let me state this fact as a separate result.

\begin{lemma}\label{lemma:NoCounterexample}
    Let $G = K_{9,9}$, $u \coloneqq 0.6391276594250365$, and
    \begin{align*}
        \bm{w} \coloneqq \begin{pmatrix}
            \beta_{1} \\
            \gamma_{1} \\
            \beta_{2} \\
            \gamma_{2}
        \end{pmatrix} \coloneqq \begin{pmatrix}
            0.5030008586984759 \\
            2.9153856978850885 \\
            -0.46951423323060537 \\
            2.8693708283343287
        \end{pmatrix}.
    \end{align*}
    Then,
    \begin{itemize}
        \item[i)] $f_{2, 9}^{\mathrm{tree}, *} \leq u$ and
        \item[ii)] $f_{2, 9}^{K}(\bm{w}) > u$.
    \end{itemize}
\end{lemma}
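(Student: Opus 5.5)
Both parts reduce to certified numerical computation in Arb, but they differ in difficulty. Part ii) is a single point evaluation. Part i) is an upper bound on a global maximum, so it mirrors the box-covering argument in the proof of \autoref{theorem:Counterexample}, with the roles of the tests changed.

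\textbf{Part ii).} I would evaluate $f_{2,9}^{K}(\bm{w})$ with the Dicke state simulator of \autoref{section:DickeStateSimulator}, run inside Arb.
\begin{itemize}
    \item By~\eqref{equation:BipartiteExpVal} and the $\Sym(9)\times\Sym(9)\rtimes\Z_{2}$ symmetry, the state lives exactly in the $(d+1)(d+2)/2 = 55$-dimensional symmetrised space.
    \item The circuit has the following pieces: the initial state $\ket{\bm{+}}$ written in that basis; the diagonal phase separator, whose cut value on a pair of Dicke weights $(a,b)$ is $a(9-b)+b(9-a)$; and the mixer. The mixer factorises over the two parts as $\exp(-i\beta X)^{\otimes 9}$, so it acts on each part's Dicke sector via Wigner small-$d$ matrices, which have closed forms.
    \item I would propagate the decimal entries of $\bm{w}$, taken as exact rationals, through this circuit with ball arithmetic. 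The edge expectation value is then $F/\abs{E}$ with $\abs{E}=81$.
    \item The result is an interval $[l,u']$, and the claim reduces to checking $l>u$.
\end{itemize}
The margin is of order $10^{-7}$, so I would work at a few hundred bits of precision to keep the radius negligible.

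\textbf{Part i).} Here I would show $f_{2,9}^{\mathrm{tree}}\le u$ on all of $\R^{4}$.
\begin{itemize}
    \item First, reduce to the fundamental domain $[-\pi/4,\pi/4)\times[0,\pi)\times[-\pi/4,\pi/4)\times[-\pi,\pi)$, using the same $\beta$-periodicity, integer-spectrum $\gamma$-periodicity and time-reversal symmetry as in the proof of \autoref{theorem:Counterexample}.
    \item Cover this domain by boxes and enclose $f_{2,9}^{\mathrm{tree}}$ on each box in Arb, using Marwaha's closed form~\cite{Marwaha2021LocalClassicalMaxCutAlgorithmOutperformsQaoaOnHighGirthRegularGraphs}.
    \item Discard a box once its upper bound $\tau_u\le u$; otherwise bisect it recursively.
\end{itemize}

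\textbf{Main obstacle.} The hard step is the neighbourhood of the global maximiser(s). There $f_{2,9}^{\mathrm{tree}}$ comes within about $10^{-7}$ of $u$ (compare $b$ and $u$), so naive bisection will not terminate in reasonable time. I would handle this in two ways:
\begin{itemize}
    \item Use the mean-value or second-order Taylor form on small boxes, so that enclosures shrink quadratically.
    \item Near each numerically located critical point, certify local concavity by enclosing the Hessian on a small ball and checking it is negative definite, e.g.\ via Gershgorin discs. Then bound the maximum on that ball by $f(x_0)+\tfrac12 g^{\top}H^{-1}g$-type estimates, or more simply by the value at the certified critical point obtained from an interval Newton (Krawczyk) step.
\end{itemize}
Outside these small balls the bisection closes with a comfortable margin.

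Combining the parts gives $f_{2,9}^{\mathrm{tree},*}\le u<f_{2,9}^{K}(\bm{w})$. This shows that optimised parameters on $K_{9,9}$ beat the tree value, so $K_{9,9}$ is not a counterexample to \autoref{conjecture:LargeLoop}.
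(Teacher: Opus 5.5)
Your plan is correct and is essentially the paper's proof. Part ii) is a single Arb-enclosed Dicke-simulator evaluation; the paper obtains $f_{2,9}^{K}(\bm{w}) \geq 0.79287$, so the margin over $u$ exceeds $0.15$ rather than being of order $10^{-7}$. Part i) is an Arb box covering with centred (mean-value) forms near the maximiser. The only difference is economy: the paper recycles the box tree from \autoref{theorem:Counterexample}, where all P leaves already satisfy $\sup f_{2,9}^{\mathrm{tree}} < b \leq u$, and defines $u$ as the largest centred-form supremum over the $13\,024$ S leaves (side width $\pi/2^{15}$), so your Hessian/Krawczyk refinement is never needed.
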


\begin{proof}
    The proof is again computational and builds directly on top of the proof of \autoref{theorem:Counterexample}, which already established $f_{2, 9}^{\mathrm{tree}}(\bm{\beta}, \bm{\gamma}) < b (< u)$ on a large portion of the parameter domain.
    Accordingly, only those boxes have to be evaluated on which $b$ could not be established as an upper bound.
    The rigorous box-wide inequality is again established using Arb.
    Additionally, the witness $\bm{w}$ is fed into the Arb-assisted Dicke state simulator, producing an interval $[\omega_{l}, \omega_{u}] \ni f_{2, 9}^{K}(\bm{w})$.
    The subsequent check confirms that $\omega_{l} > u$.
\end{proof}

\begin{corollary}[No counterexample to large-loop]\label{corollary:NoCounterexample}
    For $p = 2$ and $G = K_{9, 9}$, it holds that $C(\bm{\beta}^{*}, \bm{\gamma}^{*}) > f_{2, 9}^{\mathrm{tree}, *}$, where $(\bm{\beta}^{*}, \bm{\gamma}^{*}) \in \R^{4}$ are any maximisers of $F$ (on $G$).
\end{corollary}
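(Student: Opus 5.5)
The statement to prove is \autoref{corollary:NoCounterexample}, and I would derive it directly from \autoref{lemma:NoCounterexample} together with the exact reduction~\eqref{equation:BipartiteExpVal}.

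First I fix the normalisation. Since $K_{9,9}$ is bipartite, every edge can be cut, so $F_{\max} = \abs{E}$. Moreover, at $p = 2$ every causal cone is all of $K_{9,9}$, and all of them are isomorphic via automorphisms that map the centre edge to any other edge. Hence~\eqref{equation:BipartiteExpVal} gives
\begin{align*}
    C(\bm{\beta}, \bm{\gamma}) = f_{2, 9}^{K}(\bm{\beta}, \bm{\gamma})
\end{align*}
for every $(\bm{\beta}, \bm{\gamma}) \in \R^{4}$. In particular, maximising $F$ is the same as maximising $f_{2,9}^{K}$.

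Next I chain the inequalities. Let $(\bm{\beta}^{*}, \bm{\gamma}^{*})$ be any maximiser of $F$ on $G$. A maximiser exists because $F$ is continuous and periodic in each parameter, so it attains its supremum on a compact box. Since it is a maximiser, its value is at least the value at the witness $\bm{w}$. Then \autoref{lemma:NoCounterexample} ii) and i) give
\begin{align*}
    C(\bm{\beta}^{*}, \bm{\gamma}^{*}) = f_{2, 9}^{K}(\bm{\beta}^{*}, \bm{\gamma}^{*}) \geq f_{2, 9}^{K}(\bm{w}) > u \geq f_{2, 9}^{\mathrm{tree}, *}.
\end{align*}
The strict inequality in the middle carries through to the conclusion, and the argument holds for every maximiser.

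There is no real obstacle at this level; all the difficulty sits inside \autoref{lemma:NoCounterexample}. For part i), the upper bound on $f_{2,9}^{\mathrm{tree},*}$ requires a rigorous global enclosure of $f_{2,9}^{\mathrm{tree}}$ over the reduced parameter box. Part ii) only needs a single certified point evaluation with the Dicke state simulator. The only thing to check at the corollary level is that the reduction of $C$ to $f^{K}_{2,9}$ uses both $F_{\max} = \abs{E}$ and the edge-transitivity of $K_{9,9}$.
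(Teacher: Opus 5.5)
Your proposal is correct and is the paper's proof: the same chain $C(\bm{\beta}^{*}, \bm{\gamma}^{*}) = f_{2,9}^{K}(\bm{\beta}^{*}, \bm{\gamma}^{*}) \geq f_{2,9}^{K}(\bm{w}) > u \geq f_{2,9}^{\mathrm{tree},*}$, obtained from \autoref{lemma:NoCounterexample} and the reduction~\eqref{equation:BipartiteExpVal} with $F_{\max} = \abs{E}$. Your extra remark that a maximiser exists by periodicity and compactness is a harmless addition that the paper leaves implicit.
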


\begin{proof}
    Let $\bm{w}$ be as in \autoref{lemma:NoCounterexample} and $(\bm{\beta}^{*}, \bm{\gamma}^{*})$ be maximisers of $F$ (on $K_{9, 9}$).
    Then, by \autoref{lemma:NoCounterexample} i) and ii) it is
    \begin{align*}
        C(\bm{\beta}^{*}, \bm{\gamma}^{*}) \geq f_{2, 9}^{K}(\bm{w}) > u \geq f_{2, 9}^{\mathrm{tree}, *}.
    \end{align*}
    Note that $K_{9, 9}$ has girth $= 4 < 6$, so that \autoref{conjecture:LargeLoop} predicts the above strict inequality.
\end{proof}

\section{\label{section:Conclusion}Conclusion}

In this article, I have presented a concrete counterexample at $p = 2$ and $d = 9$ for the fixed-angle conjecture by Wurtz and Love~\cite{Wurtz2021MaxcutQuantumApproximateOptimizationAlgorithmPerformanceGuaranteesFor} and its refinement by Wurtz and Lykov~\cite{Wurtz2021FixedAngleConjecturesForTheQuantumApproximateOptimizationAlgorithmOnRegularMaxcutGraphs}.
In a nutshell, the fixed-angle conjecture claims the existence of universally good parameter values for the depth-$p$ QAOA ansatz tackling MaxCut on $d$-regular graphs, supplying approximation ratios bounded below by the optimal QAOA performance on some worst-case graphs.
Past work has been mostly focused on the $3$-regular case and has consistently yielded positive results: proven for $p \leq 2$~\cite{Wurtz2021MaxcutQuantumApproximateOptimizationAlgorithmPerformanceGuaranteesFor} and numerically for $p \leq 11$~\cite{Wurtz2021FixedAngleConjecturesForTheQuantumApproximateOptimizationAlgorithmOnRegularMaxcutGraphs}.
The counterexample at $d = 9$ has no direct consequences for the $3$-regular route, which---if correct---promises guaranteed approximation ratios above the Goemans--Williamson bound at $p = 11$.
However, the counterexample presented here refutes the possibility of adapting this strategy verbatim to general $d$-regular graphs.
Furthermore, the counterexample to the fixed-angle conjecture does not refute the weaker large-loop conjecture, comparing the performance of the graph's own optimal parameters to the optimal performance on worst-case graphs.

While the fixed-angle conjecture in its most general form has been disproved, hope remains for low-$p$ and low-$d$ special cases.
I have presented proofs for two special cases, complementing the known $d = 3$ case for $p \leq 2$:
First, for $p = 1$, a combination of elementary binomial identities and the analytic expression found by Wang~\textit{et~al.}~\cite{Wang2018QuantumApproximateOptimizationAlgorithmForMaxcutAFermionicView} for single-edge QAOA expectation values proves that the $p = 1$ version of the conjecture indeed holds true for all $d \in \N$.
Second, by combining elementary trigonometric identities, Wang~\textit{et~al.}'s closed-form expression for single-edge QAOA expectation values on connected $2$-regular graphs, and the connection between QAOA parameters and Laurent polynomials, recently established by Marwaha~\cite{Marwaha2026TheQaoaOnTheRingOfDisagrees}, I showed that the $d \leq 2$ version of the conjecture is true for all $p \in \N$.
However, MaxCut on $1$-regular and $2$-regular graphs is trivial, so that the strategy of increasing $p$---as for the $3$-regular case---does not imply any potential quantum advantage.

\begin{acknowledgments}
    This work was supported by the Federal Ministry for Research, Technology and Space (BMFTR) under grant number 13N17249.

    \noindent\textbf{Data and code availability statement.}
    All code supporting the findings of this study is available at \url{https://github.com/MarkAureli/wl-counterexample}.
\end{acknowledgments}

\section*{AI Disclosure}

I have used Claude Opus 5 and Claude Sonnet 5.
I had a long, interactive brainstorming session with Opus 5 about the conjecture and why it might not generalise to larger $d$, eventually converging to the insight that a potential counterexample would probably maximise the number of short even cycles per edge for the reasons stated in the article.
Opus 5 then immediately suggested inspecting $K_{d, d}$ for large enough $d$.
The initial numerical proof, implemented by Sonnet 5, was fundamentally flawed, as it only established strictly worse performance for one local optimiser.
The revised version then used the provable coverage of the entire parameter space via the Arb-based implementation, faithfully proving \autoref{theorem:Counterexample}.
During a discussion with the orchestrating Opus 5, ``we'' realised that the counterexample-producing mechanism is not available when $p = 1$ (no four-cycles are visible) or $d = 2$ (only one cycle per edge).
Opus 5 then quickly found the binomial and trigonometric identities which are used within the proofs of \autoref{theorem:POneFixedAngle} and \autoref{theorem:SmallDFixedAngle}.

I reviewed the entire code base, implemented by Sonnet 5, refactored small parts, and convinced myself that the implementation actually addresses the claim.
This entire document was written by me and was passed once through Opus 5 for grammar and notational consistency checks.

\bibliographystyle{apsrev4-2}
\bibliography{bibliography}

\onecolumngrid

\twocolumngrid

\appendix

\section{\label{section:DickeStateSimulator}Dicke state simulator}

In this section, I detail the Dicke state simulator employed in the numerical proof of \autoref{theorem:Counterexample}.
The observation that the QAOA circuit for the MaxCut problem on $K_{d, d}$ possesses a $\Sym(d) \times \Sym(d) \rtimes \Z_{2}$ symmetry suggests conducting the simulation within the basis of states obeying the same symmetry condition.
For simplicity, assume that the first $d$ qubits represent vertices in one part, so that the remaining $d$ qubits represent the vertices in the other part.
A state $\ket{\psi} \in \qubit^{\otimes 2d}$ is symmetric under $\Sym(d) \times \Sym(d) \rtimes \Z_{2}$ if and only if it is a superposition of states of the form
\begin{align*}
    \ket{p, q} \coloneqq \begin{cases}
        \ket{D_{p}^{d}} \ket{D_{q}^{d}}, & \text{if } p = q \\
        \tfrac{1}{\sqrt{2}} \left(\ket{D_{p}^{d}} \ket{D_{q}^{d}} + \ket{D_{q}^{d}} \ket{D_{p}^{d}}\right), & \text{else}, \\
    \end{cases}
\end{align*}
for some $0 \leq p \leq q \leq d$, where
\begin{align*}
    \ket{D_{p}^{d}} \coloneqq \binom{d}{p}^{-1 / 2} \sum_{\substack{\bm{x} \scriptin \bit^{d} \\ \abs{\bm{x}} = p}} \ket{\bm{x}}
\end{align*}
is a Dicke state on $d$ qubits.
Furthermore, note that $\braket{p, q | p', q'} = \delta_{p p'} \delta_{q q'}$ so that these
\begin{align*}
    1 + 2 + \ldots + (d + 1) = \frac{(d + 1) (d + 2)}{2}
\end{align*}
states constitute an orthonormal basis of a subspace of dimension $(d + 1) (d + 2) / 2$.
The QAOA's initial state $\ket{\bm{+}}$ already lies within this subspace and has the following representation:
\begin{align*}
    \ket{\bm{+}} = 2^{-d} \sum_{\bm{x} \scriptin \bit^{2 d}} \ket{\bm{x}} = \sum_{q = 0}^{d} \sum_{p = 0}^{q} \sqrt{\frac{(2 - \delta_{p q}) \binom{d}{p} \binom{d}{q}}{2^{2 d}}} \ket{p, q}.
\end{align*}
In particular, the entire application of the QAOA circuit to its initial state restricts to the symmetrised subspace.
It remains to derive the concrete form of $\exp(-i \gamma H_{f})$ and $\exp(-i \beta B)$ in the symmetrised basis.
On $K_{d, d}$ specifically, it is
\begin{align*}
    H_{f} = \sum_{i j \scriptin E} \tfrac{1}{2} (\one - \texttt{Z}_{i} \texttt{Z}_{j}) = \tfrac{d^{2}}{2} \one - \tfrac{1}{2} \left(\sum_{i = 1}^{d} \texttt{Z}_{i}\right) \left(\sum_{j = d + 1}^{2 d} \texttt{Z}_{j}\right).
\end{align*}
Since
\begin{align*}
    \sum_{i = 1}^{d} \texttt{Z}_{i} \ket{D_{p}^{d}} = \big(p \cdot (-1) + (d - p) \cdot 1\big) \hspace*{-1pt} \ket{D_{p}^{d}} = (d - 2 p) \hspace*{-1pt} \ket{D_{p}^{d}}
\end{align*}
and for the second tensor factor analogously, it follows that
\begin{align*}
    H_{f} \ket{D_{p}^{d}} \ket{D_{q}^{d}} &= \left(\tfrac{d^{2}}{2} - \tfrac{1}{2} (d - 2p) (d - 2 q)\right) \ket{D_{p}^{d}} \ket{D_{q}^{d}} \\
    &= (d (p + q) - 2 p q) \ket{D_{p}^{d}} \ket{D_{q}^{d}} \implies \\
    H_{f} \ket{p, q} &= (d (p + q) - 2 p q) \ket{p, q}
\end{align*}
for all $0 \leq p \leq q \leq d$.
In particular, $H_{f}$ remains diagonal in the symmetrised basis.
Its matrix exponential accordingly has the diagonal form
\begin{align*}
    e^{-i \gamma H_{f}} \ket{p, q} = e^{-i \gamma \left(d (p + q) - 2 p q\right)} \ket{p, q}.
\end{align*}

In contrast, the $2 d$-qubit mixer Hamiltonian $B^{(2 d)}$ will not be diagonal in the symmetrised basis, requiring the study of its matrix exponential directly.
Nevertheless, the mixer Hamiltonian already fulfils
\begin{align*}
    B^{(2 d)} = \sum_{i = 1}^{2 d} \texttt{X}_{i} = \sum_{i = 1}^{d} \texttt{X}_{i} +\hspace*{-3pt} \sum_{j = d + 1}^{2 d} \texttt{X}_{j} = B^{(d)} \otimes \one + \one \otimes B^{(d)}\hspace*{-1pt},
\end{align*}
so that $\exp(-i \beta B^{(2 d)}) = \exp(-i \beta B^{(d)}) \otimes \exp(-i \beta B^{(d)})$.
The $d$-qubit matrix $(\langle D_{p'}^{d} | \exp(-i \beta B^{(d)}) | D_{p}^{d}\rangle)_{p' p}$ is given by the Wigner (small) d-matrix~\cite[XV: (27)]{Wigner1931GruppentheorieUndIhreAnwendungAufDieQuantenmechanikDerAtomspektren} with angles $\alpha = \gamma = 0$ and $\beta \mapsto 2 \beta$, spin $j = d / 2$, and indices $\mu' = p' - d / 2$ and $\mu = p - d / 2$, and a per-element prefactor of $(-i)^{p' - p}$, i.e.,
\begin{widetext}
    \begin{align*}
        M_{p' p} \coloneqq \left\langle D_{p'}^{d} \left\lvert e^{-i \beta B^{(d)}} \right\rvert D_{p}^{d}\right\rangle = \sum_{k \scriptin \Z} \frac{\sqrt{p!\, (d - p)!\, p'!\, (d - p')!}}{(d - p' - k)!\, (p - k)!\, k!\, (k + p' - p)!} \big(\cos(\beta)\big)^{d + p - p' - 2 k} \big(-i \sin(\beta)\big)^{2 k + p' - p}.
    \end{align*}
\end{widetext}
The above sum runs from $k = \max\{0, p -p'\}$ to $k = \min\{d - p', p\}$.
The form of the matrix elements $\langle p', q' | e^{-i \beta B} | p, q\rangle$ now follows directly from the tensor product structure of $B = B^{(2 d)}$ and the definition of $\ket{p, q}$.
The simplest case is $p = q$ and $p' = q'$, where
\begin{align*}
    \left\langle p', p' \left\lvert e^{-i \beta B}\right\rvert p, p\right\rangle = (M_{p' p})^{2}.
\end{align*}
If instead $p \neq q$ and $p' \neq q'$, then
\begin{align*}
    \left\langle p', q' \left\lvert e^{-i \beta B}\right\rvert p, q\right\rangle = M_{p' p} M_{q' q} + M_{p' q} M_{q' p}.
\end{align*}
However, if $p \neq q$ but $p' = q'$, it holds that
\begin{align*}
    \left\langle p', p' \left\lvert e^{-i \beta B}\right\rvert p, q\right\rangle = \sqrt{2} M_{p' p} M_{p' q}.
\end{align*}
Analogously, if $p = q$ but $p' \neq q'$, then
\begin{align*}
    \left\langle p', q' \left\lvert e^{-i \beta B}\right\rvert p, p\right\rangle = \sqrt{2} M_{p' p} M_{q' p}.
\end{align*}

\section{\label{section:ArbImplementationDetails}Arb implementation details}

In Arb~\cite{Johansson2017ArbEfficientArbitraryPrecisionMidpointRadiusIntervalArithmetic}, a real number $x \in \R$ is always enclosed in an interval $[m \pm r]$, where $m$ is an arbitrary-precision float and $r$ is an unsigned float with arbitrary exponent range (and fixed mantissa).
If $x$ is exactly expressible in the working precision $p$, $x = m$ and $r = 0$.
Otherwise, $m$ is set to the correctly-rounded value of $x$ in the target precision $p$, $\round_{p}(x)$, and $r > 0$ is set to the upward-rounded resulting rounding error, ensuring $x \in [m \pm r]$.
Basic arithmetic operations and transcendental functions, including $\sin$, $\cos$, and $\exp$, use analytic error bounds to construct a ball $[\tilde{m} \pm \tilde{r}]$, provably containing the function values of all points in the input ball(s).
For example, adding two Arb balls $[m_{1} \pm r_{1}]$ and $[m_{2} \pm r_{2}]$ at precision $p$ produces the midpoint $\tilde{m} = \round_{p}(m_{1} + m_{2})$ and radius $\tilde{r} \geq r_{1} + r_{2} + \abs{\round_{p}(m_{1} + m_{2}) - (m_{1} + m_{2})}$, provably containing all values $x_{1} + x_{2}$ with $x_{i} \in [m_{i} \pm r_{i}]$, $i = 1, 2$.
For transcendental functions, determining tight radii is more involved, but essentially boils down to repeatedly applying arithmetic operations for evaluating Taylor polynomials with rigorous remainder bounds.

Concatenating Arb primitives successively inflates the ball radius after each step.
This is problematic for long calculations, such as those needed for the Dicke state-based simulation.
An alternative approach, already used within Arb itself, is to infer $\tilde{r}$ from the mean value theorem:
For $f : \R \rightarrow \R$ differentiable, it is
\begin{align*}
    \sup_{\abs{t} \leq r} \abs{f(m + t) - f(m)} \leq r \sup_{\abs{t} \leq r} \abs{f'(m + t)},
\end{align*}
which gives a valid $\tilde{r}$ upon upward-rounding.
This requires, in turn, an upper estimate of $\abs{f'}$ on the ball $[m \pm r]$, which can be provided by the ``ordinary'' Arb method for $f'$.
This naturally extends to differentiable functions $f : \R^{n} \rightarrow \R$, simply by applying the mean value theorem in each coordinate direction.
For $f : \R^{n} \rightarrow \R$, it is
\begin{align*}
    \sup_{\forall i : \abs{t_{i}} \leq r_{i}} \abs{f(m_{1} + t_{1}, \ldots, m_{n} + t_{n}) - f(m_{1}, \ldots, m_{n})} \\
    \leq \sum_{j = 1}^{n} r_{j} \sup_{\forall i : \abs{t_{i}} \leq r_{i}} \abs{\partial_{j} f(m_{1} + t_{1}, \ldots, m_{n} + t_{n})},
\end{align*}
requiring upper bounds of $\abs{\partial_{j} f}$ for all $j \in [n]$ on the $n$-dimensional box $[m_{1} \pm r_{1}] \times \cdots \times [m_{n} \pm r_{n}]$.
In the implementation, these gradient enclosures are obtained by forward-mode automatic differentiation over Arb balls, i.e., by propagating value and all four partial derivatives through the closed-form tree expression and the Dicke state simulator simultaneously.

The covering in the proof of \autoref{theorem:Counterexample} is organised as a binary box tree, processed depth-first and starting from the symmetry-reduced domain, whose endpoints are rounded outward to the neighbouring binary64 numbers.
All computations run at a working precision of $96$ bits.
Every box is enclosed by a ball $[m_{1} \pm r_{1}] \times \cdots \times [m_{4} \pm r_{4}]$ with binary64 midpoints and upward-rounded radii, and every bound extracted from a ball is rounded outward once more.
Each box is subjected to up to three tests, in this order:
\begin{itemize}
    \item[P\,n)] natural Arb evaluation of $f_{2, 9}^{\mathrm{tree}}$ on the ball has supremum $< b$;
    \item[P\,c)] if the natural supremum exceeds $b$ by at most $0.03$, the centred (mean-value) form of $f_{2, 9}^{\mathrm{tree}}$ has supremum $< b$;
    \item[S)] if, additionally, every side of the box has width at most $1.5 \times 10^{-4}$, the centred form of the difference $f_{2, 9}^{K} - f_{2, 9}^{\mathrm{tree}}$, with gradient enclosure of the difference, has supremum $< 0$.
\end{itemize}
Leaves of type P\,n or P\,c contain no parameters with $f_{2, 9}^{\mathrm{tree}} \geq b$, and leaves of type S satisfy ii) with their certified supremum.
A box passing none of the applicable tests is bisected at the exact binary64 midpoint of the coordinate with the largest first-order contribution $r_{i} \sup \abs{\partial_{i} f_{2, 9}^{\mathrm{tree}}}$ (or the widest coordinate if no gradient was computed), so that the leaves tile the domain exactly.
The constant $c$ in \autoref{theorem:Counterexample} is the negative of the largest certified supremum over all S leaves.

The final certificate is a depth-$62$ box tree of $3\,237\,923$ nodes, consisting of $1\,618\,961$ internal nodes and $1\,618\,962$ leaves.
Of the leaves, $298\,679$ are discharged by P\,n, $1\,307\,259$ by P\,c, and $13\,024$ by S.
All S leaves have a largest side width of $\pi / 2^{15} \approx 9.59 \times 10^{-5}$.
In total, the run required $5\,146\,656$ ball evaluations of $f_{2, 9}^{\mathrm{tree}}$, $1\,908\,733$ gradient enclosures of $f_{2, 9}^{\mathrm{tree}}$, and $13\,024$ gradient enclosures of $f_{2, 9}^{K}$, taking about $410\,\mathrm{s}$ on a single core via the python-flint bindings of Arb.
The lower bound in \autoref{theorem:Counterexample} i) is established at $256$ bits, where the witness value exceeds $b$ by $1.58 \times 10^{-16}$.

The certificate is re-verified by an independent checker, which uses the inf-sup interval arithmetic of mpmath~\cite{TheMpmathDevelopmentTeam2023MpmathAPythonLibraryForArbitraryPrecisionFloatingPointArithmeticVersion130} at $25$ significant decimal digits.
The checker reconstructs every box from the recorded split coordinates, confirms that the leaves exactly tile the reduced domain, verifies the containment of each box in its enclosing ball in exact rational arithmetic, and re-establishes the classification of every single leaf.
For the S leaves, it uses the slightly looser bound $\sup \abs{\partial_{i} f_{2, 9}^{K}} + \sup \abs{\partial_{i} f_{2, 9}^{\mathrm{tree}}}$ on the partial derivatives of the difference, and recomputes the violation margin from scratch rather than reading it from the certificate.
Since the checker's bound is looser, this margin may be marginally smaller than the value stated in \autoref{theorem:Counterexample}; the constant $c$ stated there is the one obtained from the Arb run, and the checker's role is to independently confirm that every S leaf satisfies $\sup (f_{2, 9}^{K} - f_{2, 9}^{\mathrm{tree}}) < 0$.
The witness inequality $b \leq f_{2, 9}^{\mathrm{tree}}(\bm{\beta}', \bm{\gamma}')$ is re-checked at $40$ digits.
The full replay takes about two hours.

The proof of \autoref{lemma:NoCounterexample} i) reuses the same box tree.
P leaves already satisfy $\sup f_{2, 9}^{\mathrm{tree}} < b \leq u$, so only the $13\,024$ S leaves are re-evaluated with the centred form of $f_{2, 9}^{\mathrm{tree}}$.
The upper bound $u$ is taken to be the largest of these certified suprema.
For ii), the Dicke state simulator at $256$ bits yields $f_{2, 9}^{K}(\bm{w}) \geq 0.79287$, exceeding $u$ by more than $0.15$.

\end{document}